\documentclass[aoas]{imsart}

\RequirePackage{amsthm,amsmath,amsfonts,amssymb}
\RequirePackage[authoryear]{natbib}

\startlocaldefs
\usepackage{array}
\usepackage{amsmath}
\usepackage{graphicx}
\usepackage{enumerate}
\usepackage{url} 
\usepackage{adjustbox}
\usepackage{bigints}
\usepackage[ruled, vlined, linesnumbered]{algorithm2e}
\usepackage{babel}
\usepackage{amsthm}
\usepackage{amssymb}
\usepackage{color}
\usepackage{float}
\usepackage{mathtools}
\usepackage{subcaption}

\providecommand{\tabularnewline}{\\}
\floatstyle{ruled}
\newfloat{algorithm}{tbp}{loa}
\providecommand{\algorithmname}{Algorithm}
\floatname{algorithm}{\protect\algorithmname}

\numberwithin{equation}{section}
\numberwithin{figure}{section}

\newtheorem{lemma}{Lemma}[section]

\theoremstyle{plain}
\newtheorem{thm}{\protect\theoremname}
\theoremstyle{remark}
\newtheorem{rem}[thm]{\protect\remarkname}
\theoremstyle{plain}
\newtheorem{prop}[thm]{\protect\propositionname}
\theoremstyle{plain}
\newtheorem{lem}[thm]{\protect\lemmaname}
\theoremstyle{definition}
\newtheorem{defn}[thm]{\protect\definitionname}

\newtheorem*{assumption*}{\assumptionnumber}
\providecommand{\assumptionnumber}{}


\newtheorem*{theorem*}{\theoremnumber}
\providecommand{\theoremnumber}{}


\SetKwInput{KwInput}{Input}
\SetKwInput{KwOutput}{Output}
\let\oldnl\nl
\newcommand{\nonl}{\renewcommand{\nl}{\let\nl\oldnl}}

\providecommand{\definitionname}{Definition}
\providecommand{\lemmaname}{Lemma}
\providecommand{\propositionname}{Proposition}
\providecommand{\remarkname}{Remark}
\providecommand{\theoremname}{Theorem}

\endlocaldefs

\begin{document}

\begin{frontmatter}

\title{Confidence Intervals for Rate Estimation with Importance Sampling in Autonomous Vehicle Evaluation}
\runtitle{CIs for rate estimation}

\begin{aug}

\author[]{\fnms{Aiyou}~\snm{Chen}}\ead[label=e1]{aiyouchen@waymo.com},
\author[]{\fnms{Ruixuan Rachel}~\snm{Zhou}}\ead[label=e2]{rachelzhou@waymo.com},
\author[]{\fnms{Joseph J.}~\snm{Lee}}\ead[label=e3]{josephjlee@waymo.com},
\author[]{\fnms{Nicholas}~\snm{Chamandy}}\ead[label=e4]{chamandy@waymo
.com}\\
\and
\author[]{\fnms{Henning}~\snm{Hohnhold}}\ead[label=e5]{henningh@waymo.com}
\address[]{Waymo LLC \printead[presep={ ,\ }]{e1,e2,e3,e4,e5}}

\end{aug}

\begin{abstract}
Accounting for both rare events and complex sampling presents challenges when quantifying uncertainty for rate estimation in autonomous vehicle performance evaluation.
In this paper, we introduce a statistical formulation of this
problem and develop a unified compound Poisson model framework for
unbiased rate estimation through the Horvitz-Thompson estimator. Though
asymptotic theory for the model is available, the inference of confidence intervals (CIs) in the presence of
rare events requires new investigation. We also advocate for a new monotonicity criterion
for rate CIs—summing the rates of disjoint types of events should produce not only a
higher point estimate but also higher confidence bounds than for the individual
rates—that facilitates
interpretability in real applications. We propose a novel \textit{exponential
bootstrap} (EB) method for CI construction based on a fiducial argument; it satisfies the monotonicity property, while novel extensions of some existing
methods do not. Comprehensive numerical studies show that EB performs
well for a wide range of settings relevant to our applications. 
Fast implementation of EB based on saddlepoint approximation is also developed, which may be of independent interest.

\end{abstract}

\begin{keyword}
\kwd{autonomous vehicles}
\kwd{confidence interval}
\kwd{compound Poisson}
\kwd{rare events}
\kwd{importance sampling}
\end{keyword}

\end{frontmatter}


\section{Introduction}
\label{sec:intro}

Assessing driving performance is a central element of ensuring a safe and scalable deployment in the autonomous vehicle (AV) industry. As AVs approach and even surpass some human driving capabilities, the vast majority of their driving may become routine and uninteresting. In turn, assessing their performance often translates to estimating the frequency of rare events. Events of highest interest tend to be the rarest, e.g., collisions with potentially high-severity injuries only happen once every several million human-driven miles  \citep{kalra2016driving, kusano2024comparison}. Even events of interest that do not pose an immediate safety risk, such as those in which a vehicle contributes to unnecessary traffic congestion, can be rare enough to invalidate common statistical methods, be difficult to detect in an automated fashion, and pose other unique measurement challenges (e.g., how does one assess the vehicle’s causal impact on congestion?). Accurate estimation of AV performance thus requires both large-scale data collection and the ability to efficiently extract a very sparse signal from those data. In short, AV performance measurement is a ``needle-in-a-haystack'' problem. See  \cite{kusano2024comparison,chen2024dynamic,di2024comparative,di2024latest} for some recent data regarding AV performance as well as comparisons with human benchmarks.

The software that powers an AV system undergoes frequent iteration and improvement. Evaluating the performance of a new version of self-driving software accurately and quickly can be challenging, as collecting a large volume of on-road test driving data is slow. Suppose that we want to evaluate today's software version and that it takes $k$ days to collect a sufficiently large volume of on-road test data for the desired level of statistical precision. If we find after one round of data collection that there is need for improvement, we must first implement the required software changes, and then wait an additional $k$ days to evaluate the system again. Now, consider the fact that there are actually many aspects of software that may interact with each other, being developed in parallel—e.g., decision logic to safely change lanes, as well as logic to select the optimal route to the destination. This cycle of waiting for on-road test driving data, making code changes, and waiting again to validate those changes makes AV software development purely using on-road test driving infeasible on any reasonable time scale. In other words, the “feedback period” $k$ is typically much longer than the time needed for engineers to implement software improvements. In addition, to ensure safety,
on-road testing of brand new software involves the in-car presence of a human safety operator \citep{webb.etc:2020}.

Performance evaluation using simulation-based ``virtual driving'' is therefore
an important complement to on-road testing in supporting greater
speed and scale for AV development.\footnote{For example, see https://waymo.com/blog/2021/07/simulation-city/} For the purposes of this paper, “simulation” refers to virtual cars “driving” via a computer simulation engine through situations logged from previous real-world driving, in order to measure the software’s performance. Note that in a given simulation, the virtual AV can be configured to drive according to a specific software version, in order to mimic real-world behavior as closely as possible. Simulation provides a safer and faster evaluation platform than on-road testing: safer because we can allow simulated situations to play out regardless of their outcome, without needing intervention from a human safety operator, and faster because simulated cars can drive in the virtual world orders of magnitude faster than real time.

Simulation can also provide signal amplification. Most driving events of interest occur as a result of the accumulation of subtle but important factors. For example, a small perturbation of the AV’s position or speed when a traffic light turns yellow can make a previously straightforward “stop or go” decision challenging. Or, slightly different behaviors from cars in adjacent traffic stacks can influence whether changing lanes is a good decision. Simulating a single driving log under a variety of different conditions can aid the discovery of sub-optimal outcomes and amplify statistical signals. Large amounts of simulated data with amplified signals provide a great first step to work with—see \cite{jiangscenediffuser,mahjourian2024unigen} and references therein.

This large volume of simulated data must undergo highly efficient importance sampling\footnote{For example, see https://artowen.su.domains/mc/Ch-var-is.pdf} before it can be used to construct rate estimators, as described in detail in the next section. Obtaining an unbiased point estimate that accounts for importance sampling is relatively simple, e.g., using a Horvitz-Thompson estimator (\cite{horvitz1952generalization}). But the real-world implications of AV deployment and the sparsity of events together highlight the importance of quantifying an estimate's uncertainty. When evaluating a highly mature system, we often observe point estimates near zero; such estimates are of little use to decision makers, or worse, may paint a misleading picture of performance when variance is large. Understanding the associated uncertainty---e.g., via a confidence interval (CI)---helps strengthen our confidence in a smooth real-world deployment, without unforeseen negative impacts on the communities in which we operate; it also helps design the next iteration of driving software. Developing  a reliable CI in this context is not only important but challenging, and must account for rare events and complex sampling among other things.

In this paper we introduce the problem of statistical inference for rate estimation with importance sampling in AV evaluation based on large-scale simulations and human review, a sequel to the sampling problem discussed in \cite{terres2023behavioral}. We focus on producing confidence intervals and propose a novel exponential bootstrap method, extending the existing Gamma methods (e.g., \cite{fay1997confidence}) and satisfying a statistical monotonicity property that is desirable for interpretability but has not been studied before. The rest of the paper is organized as follows: we first describe the application, data collection procedure, and resulting data set in greater detail in Section 2; we derive a unified statistical model in Section 3; some related work is reviewed in Section 4; the development of the CI methodology and extensive numerical studies are presented in Section 5 and 6, respectively; some real-world data analysis and statistical findings are reported in Section 7; finally, Section 8 concludes with some discussions. Technical derivations and a fast algorithm based on the saddlepoint approximation are presented in the appendix.

\section{Rate estimation: problem and data specification}

To ground the problem description, suppose we wish to identify events where the AV contributes to traffic congestion (``events of interest'') and to estimate the corresponding overall event rate per million driven miles, and the rates of occurrence at different severity levels, e.g., causing a traffic delay having duration within some pre-specified disjoint intervals (less than 1 minute, 1 to 5 minutes, 5+ minutes, etc.). The starting point for this problem is a massive dataset of real, logged on-road driving miles—this could encompass several years of driving data from multiple versions of the self-driving software.
From this large corpus of data, a much smaller (but still large) set of driving miles is selected for simulation with the new software, as described in Section 1. For a sense of scale, the original data set of on-road driving may contain 10s of millions of miles, from which we choose to simulate 10s of thousands to millions of miles, depending on the application.

The next step in rate estimation is to identify events of interest from among the simulated driving miles. Somewhat counter-intuitively, this process is in general not straightforward. Some form of human review is often employed, the need for which stems from two factors. First, the logic that determines a particular event classification may be subtle and complex. In the present application, identifying traffic congestion events of interest may require humans to assess whether the congestion exists exogenously or is truly {\em caused} by an AV's actions. Second, the behavior of other actors (e.g., vehicles, pedestrians) and how they interact with the AV in the simulated scene may need realism assessment, and the event is deemed a false positive if the simulation was unrealistic. For example, a scene in which a real-world actor may have simply navigated around the stopped AV—whereas a simulated vehicle remained stuck behind it—may be deemed unrealistic. Conducting this human assessment is relatively slow compared to simulation, though still much faster than on-road test driving.

Much like selecting a subset of driving logs for simulation, selecting simulated events for human review is a technical challenge. In both stages, probabilistic sampling is employed to reduce the pool of potentially interesting data to satisfy resource constraints, cutting down the data by orders of magnitude while retaining most of the signal of interest. We use importance sampling, and leverage machine learning models designed to favor events with a higher likelihood of being a true positive event of interest. This importance sampling approach allows for unbiased rate estimation while making more efficient use of resources, especially when events are rare. Simple uniform sampling, by contrast, would generate few or even zero events of interest in some applications. These machine learning models are typically tuned to a lower-precision, higher-recall operating point to avoid missing potentially impactful events---therefore satisfying both the rates estimation use case of this paper, and our desire to discover novel types of rare events \citep{sinha2025rate}. From the statistician's perspective, the models used as inputs to importance sampling probabilities can be treated as “black boxes”, and our estimation procedures should be robust to their precise structure and feature space. 

\subsection{Multi-stage importance sampling}
This sampling design plays a large role in efficiently estimating event frequencies, and therefore warrants further elaboration. The data collection process can be described precisely in three major steps, as follows. 

\begin{enumerate}[(1)]
\item Draw a large sample of data points (each data point is a short contiguous block of driving—referred to hereafter as a {\it run segment})\footnote{A brief description of run segments is available in \cite{webb.etc:2020}; see example data at https://waymo.com/open.} by importance sampling from logs
saved from on-road testing over a total of $m$ million miles, favoring interesting driving contexts (e.g. segments on busy roads with other road users in close proximity to the AV); 
\item Simulate each run segment from Step 1 with the AV software to be evaluated, and automatically identify candidate
events of interest (e.g., situations where other vehicles near the AV are driving more slowly than expected); 
\item Draw a sample set of the candidate events obtained from Step 2
by importance sampling; pass these segments to human review, which identifies whether or not
a candidate event is a true positive (meets the criteria for an event of interest, e.g., the AV made congestion around it worse, causing a measurable delay for other road users). 
\end{enumerate}
The sampling probabilities are determined by trained machine learning
models, which predict how likely a segment is to generate a candidate event
from simulation (Step 1), and how likely a candidate event is to be a true
positive (Step 3). To minimize technical complexity, the implementation
of importance sampling is based on Poisson sampling (\cite{mccormick1937sampling}); that is, whether a run segment is sampled for simulation or whether a simulated event is sampled for human review is determined by an \textit{independent} Bernoulli trial, which can be performed in parallel to enable large-scale tasks.
\subsection{Estimand}
We are interested in estimating the expected number of incidents (true positive events of interest) that the AV may encounter per million miles (IPMM), which we denote by $\theta$. As the original set of driving logs represents the driving mile population of interest,
$\theta$ could in theory be estimated by the total number of human-confirmable incidents in this set, divided by $m$; but of course this is only available if all logged
run segments are simulated and all potential candidate events are checked by
human review---which is unrealistic for the reasons discussed previously. While point estimation for $\theta$ is straightforward with standard methods, as formulated in the next section, uncertainty estimation is more subtle. An interval estimator for $\theta$ must not only properly quantify sampling uncertainty (itself challenging due to event rarity and imperfect machine learning models), but must also be interpretable to business stakeholders. In particular, point estimates and CIs must be meaningful both in isolation and when comparing across different levels of aggregation. For example, the lower bound estimate of the rate of 5+ minute congestion events should not exceed the lower bound estimate of the rate of 1+ minute congestion events. This property, which we formalize in Section 5, is not trivially satisfied.

\subsection{Final data set}
The data set available for analysis consists of sampling probabilities for simulating run segments from Step 1, sampling probabilities for reviewing simulated events from Step 3, and the review outcomes. (The data are restricted to run segments for which either simulation or human review occurred.) These variables are formally introduced in the next section for model formulation and statistical inference. Note that the human review outcome variable may consist of multiple columns which are binary or even continuous. For the congestion application, we would expect to have one column with the reviewer's binary assessment of whether or not the candidate event was a true positive AV-caused congestion event, and a second column indicating the duration of the delay. (Typically, the duration variable would be further decomposed into binary variables for each disjoint duration interval.) 
Without loss of generality, hereafter we take $m=1$ for notational simplicity. While our method can be applied more broadly, to minimize technical complexity, we furthermore assume that each run segment may generate at most
one incident of interest. 

\section{The model formulation for rate estimation}

For uncertainty quantification, it is often useful to posit a rigorous statistical model for the data-generating process.
Next we show that the multi-stage importance sampling data collection
procedure as described above can be formulated with a compound Poisson
model framework under some mild assumptions.
Let $N$ be the total number of logged run segments to sample from.

Let $V_{i}$ be the feature, often in high dimensions, associated with the
$i$-th segment which determines the sampling probabilities for both
simulation and human review, $i=1,\cdots,N$. To be precise, the simulation
sampling probability relies on the segment feature only, while the
sampling probability for human review relies on the simulation result,
which depends on not only the feature associated with the segment,
but also the simulator and the software version. Since the simulator and
software are pre-specified, it is conceptually simpler to denote $V_{i}$ as
the feature associated with the segment only.

Let $s(V_{i})$ be the probability that the $i$-th segment
will be sampled for \textit{simulation} and let $h(V_{i})$ be the probability that it will be sampled
for \textit{human} review. Here $h(V_{i})=0$ if the simulation on the $i$th
segment does not generate any candidate event, i.e. there is no need for
human review. We also assume that candidate events will capture
all true positives.\footnote{To relax this assumption, one may require $h(V_i)>0$ even if the simulation on $i$ does not generate any candidate event.} Both $s$ and $h$ are specified by pre-trained machine learning models.

Let $I_{i}\sim Bernoulli(s(V_{i}))$ indicate whether the $i$-th
segment is sampled for simulation and $J_{i}\sim Bernoulli(h(V_{i}))$
indicate whether it is sampled for human review. Let $B_i = I_i J_i$, then
$B_i \sim Bernoulli(p(V_i))$, where $p(v)=s(v)h(v)$ is the overall probability that a segment with feature $v$ will go through both simulation and human review.

Let $Y_{i}$ indicate whether the $i$th segment would generate
a true positive event at the end of the human review process. 
Let $r(V_{i})=P(Y_{i}=1|V_{i})$ be the
true positive probability. Under this model, $\sum_{i=1}^{N}r(V_{i})$ would
be a natural estimator of the rate $\theta$; however, $r$ is unknown. Instead, we adopt the Horvitz-Thompson
estimator 
denoted as $\hat{\theta}$, which is unbiased, i.e. $E(\hat\theta)=\theta$, and
can be formally written as
\begin{align}
\hat{\theta} & =\sum_{i=1}^{N}W_i\label{eq:theta-hat}
\end{align}
where \begin{align}
W_{i} & =\frac{B_{i}}{p(V_{i})}\times Y_{i}.\label{eq:Wi}
\end{align}

For convenience,
we take $0/0$ as $0$ to incorporate the case $h(V_{i})=0$ which
implies $Y_{i}=0$ and $J_{i}=0$. 
Note that given $V_{i}$, $W_{i}$ takes the value $0$ with probability
$1-p(V_{i})r(V_{i})$ and the value $1/p(V_{i})$
with probability $p(V_{i})r(V_{i})$.

\begin{rem}
\label{rem:h(X)}$h(V_{i})$ and $J_{i}$ are observable only if the
$i$-th segment has been simulated, i.e. $I_{i}=1$, while $Y_{i}$
is observable only if $I_{i}=1$ and $J_{i}=1$.
\end{rem}

\begin{prop}
\label{prop:CP}Assume that (1) $N$ follows a Poisson distribution with mean $\lambda$,
and (2) $V_{i}$ are i.i.d. from the feature population, independent
of $N$. Then
\begin{align}
\hat{\theta} \equiv\sum_{i:W_{i}>0}W_{i}\label{eq:CP}
\end{align}
follows a compound Poisson (CP) distribution. 
\end{prop}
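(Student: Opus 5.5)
The plan is to view $\hat\theta=\sum_{i=1}^N W_i$ as a random sum of i.i.d.\ terms and then apply Poisson thinning. The first step is to establish that, under assumptions (1) and (2), the pairs $(V_i,W_i)$ are i.i.d.\ across $i$ and independent of $N$. Given $V_i$, the indicators $I_i$ and $J_i$ come from independent Bernoulli trials (Poisson sampling), and $Y_i$ depends only on segment $i$. Hence $W_i=B_iY_i/p(V_i)$ is a measurable function of $(V_i,I_i,J_i,Y_i)$, and these tuples are i.i.d.\ and independent of $N$. I will state explicitly the modeling convention already implicit in the setup: the randomization for $(I_i,J_i,Y_i)$ is conditionally independent across segments given the $V$'s.

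By the description after (\ref{eq:Wi}), each $W_i$ marginally equals $0$ with probability $1-q$ and equals $1/p(V_i)$ with probability $q$, where
\begin{align*}
q=E\{p(V)r(V)\}.
\end{align*}
Let $F$ denote the conditional law of $W_i$ given $W_i>0$. This is the law of $1/p(V)$ under the tilted measure with density proportional to $p(v)r(v)$ relative to the feature population. Since $\hat\theta=\sum_{i=1}^N W_i$ is a Poisson($\lambda$) random sum of i.i.d.\ nonnegative variables, it is already compound Poisson. The cleaner representation is the one in (\ref{eq:CP}), which keeps only the positive terms.

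The next step applies the thinning theorem. Let $K=\#\{i\le N: W_i>0\}$. Conditional on $N=n$, $K\sim\mathrm{Binomial}(n,q)$, and conditional on $K=k$ the $k$ positive values are i.i.d.\ from $F$ and independent of everything else. The most direct check is through the Laplace transform (or characteristic function):
\begin{align*}
E e^{-t\hat\theta}=\sum_{n}\frac{e^{-\lambda}\lambda^n}{n!}\bigl(1-q+q\,\phi_F(t)\bigr)^n=\exp\{\lambda q(\phi_F(t)-1)\},
\end{align*}
where $\phi_F$ is the Laplace transform of $F$. The right-hand side is the transform of $\sum_{j=1}^{K}X_j$ with $K\sim\mathrm{Poisson}(\lambda q)$ and $X_j\stackrel{iid}{\sim}F$ independent of $K$. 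So $\hat\theta=\sum_{i:W_i>0}W_i$ is compound Poisson with rate $\lambda q$ and jump distribution $F$.

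The computation itself is routine; the main care point is justifying that the $W_i$ are i.i.d.\ and independent of $N$. Two features make this delicate. First, $h$ depends on simulation output. It does so only through segment $i$'s own feature $V_i$ and the fixed software, so this is absorbed into $V_i$. Second, Remark~\ref{rem:h(X)} says $h$ and $Y_i$ are not always observed. This does not matter, because the distributional claim concerns the random variables themselves, not what is observed. I would also note the degenerate case $q=0$, where $\hat\theta\equiv 0$, as a trivial compound Poisson with zero rate.
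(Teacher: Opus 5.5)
Your proposal is correct and takes the same route as the paper, which argues only that the $W_i$ are i.i.d.\ by Assumption 2 and independent of $N\sim\mathrm{Poisson}(\lambda)$, so $\hat\theta=\sum_{i=1}^N W_i$ is a compound Poisson sum. Your Laplace-transform (thinning) step goes slightly further: it identifies the positive-terms representation as compound Poisson with rate $\lambda E\{p(V)r(V)\}$ and jump law $F$, a refinement the paper leaves implicit and which mirrors the generating-function argument it uses for Lemma~\ref{lem:weighted-poisson}.
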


Here $\theta$ is the parameter of interest, and $\lambda$ and the distribution of $W_1$ are the nuisance parameters \citep{bickel2015mathematical}.

The proof of Proposition \ref{prop:CP} follows from the fact that $W_{i}$ are i.i.d based on
Assumption 2. Both Assumption 1 and 2 are reasonable when the
run segments are properly defined so that true positive incidents
are independent of each other. How to properly design the run segments to minimize spatial-temporal dependencies is beyond the scope of the current paper but may be discussed elsewhere. Intuitively, event rarity further helps make the independence assumption a good approximation. 
The collection of $\{W_{i}:W_{i}>0\}$
is also referred to as weights later. Note that the Poisson assumption (1) is not essential--when it is violated, under mild conditions $\hat\theta$ would follow an approximate (instead of exact) compound Poisson distribution (see \cite{vcekanavivcius2024compound} and references therein).

Proposition \ref{prop:CP} can be generalized when the response
$Y_{i}$ is not binary but continuous, for example, measuring the duration of a traffic congestion event contributed to by the AV. We assume binary outcomes for the remainder of this paper.

\section{Related work}

The compound Poisson model is a classical statistical model
(\cite{feller2008introduction}) with applications in many areas such
as public health, insurance and astronomy (see \cite{vcekanavivcius2024compound} and references therein).
If the distribution of $W$ can be described by a few parameters,
the model is a parametric model, otherwise it can be viewed as a semi-parametric
model \citep{bickel1993efficient}. Traditional procedures for CI construction include bootstrap
(\cite{efron1992introduction}), the delta method (see \cite{bickel2015mathematical}
for parametric models and \cite{bickel1993efficient} for semi-parametric
models) and empirical likelihood (\cite{owen2001empirical}), which
all rely on large sample approximation. Along these
lines, when analyzing the CP model,
\cite{kegler2007applying} proposed applying a normal approximation in the
logarithmic scale (rather than the standard normal approximation to \eqref{eq:CP}) in order to avoid the potential negative bound,
while \cite{li2012empirical} considered the empirical likelihood CI
by conditioning on the number of observed contributions, i.e. $\sum_{i=1}^{N}I(W_{i}>0)$.
The most interesting approach for our application, due to its simplicity and flexibility, is the Poisson bootstrap (PB) for CP
proposed by \cite{bohm2014statistics};
 see \cite{chamandy2012estimating} and references therein which derived the same Poisson bootstrap but as a variation of the standard multinomial
bootstrap. We also note that PB may be viewed as a parametric bootstrap when the
sampling probabilities are discrete, see details later in Remark \ref{rem:PB-parametric}. Unlike
these methods, which only make use of $\{W_{i}:W_{i}>0\}$, our
method takes into account the sampling probabilities behind all observations.

To handle a small sample setting with the existence of nuisance parameters,
various methods have been developed: for example, the Buehler method \citep{kabaila2006improved}, the repro samples method \citep{xie2024repro} and
the unified method based on likelihood ratio \citep{sen2009unified},
provide warranted coverage, while the generalized fiducial argument
(see \cite{hannig2016generalized}) and the inferential
method \citep{martin2015marginal} provide approximate coverage. These methods are developed for parametric
models and thus do not directly apply to our case since it is hard
to come up with a generic parametric model for $W_{i}$ unless they are discrete. Other directions include synthesizing the inference from multiple studies, e.g. \cite{liu2014exact}, or multiple models, e.g. \cite{agarwal2025pcs} but do not directly apply here either.

When the sampling probabilities are discrete, our model simplifies to a weighted sum of independent Poisson model, for which the problem has been studied extensively and 
the most popular CI methods are
called the Gamma methods (see \cite{fay1997confidence,tiwari2006efficient,fay2017confidence} and references therein). We discovered that these methods however do not have a desired monotonicity property (defined below) that we advocate for in this paper, which is important in the applied setting as it allows practitioners to  communicate a consistent story to business decision-makers.

\section{Statistical inference}

We first consider the case where the sampling probabilities are discrete,
and show that $\hat{\theta}$ reduces to a weighted sum of independent
Poisson model and derive a novel variation of the original Gamma method (\cite{fay1997confidence}),
called weighted Gamma, based on the fiducial argument. Then a natural extension to continuous sampling
probabilities leads to a more general algorithm for the CI construction,
called Exponential bootstrap.

\subsection{Discrete sampling probabilities and the monotonicity property}

To start with, let's assume that both $s(\cdot)$ and $h(\cdot)$
take discrete values, then $W_{i}$ takes discrete values too since
$I_{i},J_{i},Y_{i}$ are all discrete. The lemma below affirms that
for multi-stage importance sampling with discrete probabilities (e.g. thinking of stratified sampling, where candidates from the same stratum are assigned the same probability), the
rate estimator $\hat{\theta}$ follows a weighted sum of independent
Poisson model. 
\begin{lem}
\label{lem:weighted-poisson}Let $w_{0}^{*}\equiv0<w_{1}^{*}<\cdots<w_{K}^{*}$
enumerate all possible values of $W_{i}$. For $k=0,1,\cdots,K$, let
\begin{align*}
X_{k} & =\sum_{i=1}^{N}I(W_{i}=w_{k}^{*}).
\end{align*}
Then the model \eqref{eq:CP} can be rewritten as below
\begin{align}
\hat{\theta} & \equiv\sum_{k=1}^{K}w_{k}^{*} X_{k},\label{eq:weighted-poisson}
\end{align}
and moreover $X_{k}$, $1\leq k\leq K$, are mutually independent Poissons:
\begin{align*}
X_{k} & \sim Poisson(\lambda_{k}),
\end{align*}
where $\lambda_{k}=\lambda\times P(W_{1}=w_{k}^{*})$. 
\end{lem}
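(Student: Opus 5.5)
The rewriting \eqref{eq:weighted-poisson} is pure bookkeeping. Because every $W_i$ equals one of $w_0^*,\dots,w_K^*$, we can group the summands of $\hat\theta=\sum_{i=1}^N W_i$ by their common value:
\begin{align*}
\hat\theta=\sum_{k=0}^K w_k^*\sum_{i=1}^N I(W_i=w_k^*)=\sum_{k=0}^K w_k^* X_k .
\end{align*}
The $k=0$ term vanishes since $w_0^*=0$. This also matches \eqref{eq:CP}, because that expression simply drops the zero summands. The substantive claim is therefore the distributional one: $(X_1,\dots,X_K)$ are independent with $X_k\sim Poisson(\lambda_k)$.

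I would prove it by Poisson thinning (colouring), via the joint probability generating function. First, the $W_i$ are i.i.d., independent of $N$; this is the fact used in the proof of Proposition \ref{prop:CP}. Conditional on $V_i$, the pair $W_i$ depends only on $(B_i,Y_i)$, and these are generated independently across $i$. The $V_i$ are i.i.d. and independent of $N$, so the $W_i$ are i.i.d. with some law $q_k=P(W_1=w_k^*)$, where $\sum_{k=0}^K q_k=1$.

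Conditional on $N=n$, the vector $(X_0,\dots,X_K)$ is then Multinomial$(n;q_0,\dots,q_K)$. Its conditional pgf is $E[\prod_k z_k^{X_k}\mid N=n]=(\sum_k q_k z_k)^n$. Averaging over $N\sim Poisson(\lambda)$ gives
\begin{align*}
E\Big[\prod_{k=0}^K z_k^{X_k}\Big]=\exp\Big(\lambda\Big(\sum_k q_k z_k-1\Big)\Big)=\prod_{k=0}^K \exp\big(\lambda q_k(z_k-1)\big).
\end{align*}
This is the product of the pgfs of independent $Poisson(\lambda q_k)$ variables. Since the pgf determines the joint law on $\mathbb{N}^{K+1}$, the $X_k$ are mutually independent Poisson with $\lambda_k=\lambda q_k$. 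Restricting to $k\ge1$ gives the statement.

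The main obstacle is only justifying the i.i.d. structure of the $W_i$ and its independence from $N$. In particular, the Bernoulli draws $I_i,J_i$ and the outcomes $Y_i$ must be independent across segments given the features; this is implicit in the Poisson sampling design and Assumption (2). Once that is granted, the computation above is routine. Alternatively, one could compute $P(X_1=x_1,\dots,X_K=x_K)$ directly by summing the multinomial probabilities against the Poisson mass of $N$ over $x_0$, but the pgf route avoids that summation.
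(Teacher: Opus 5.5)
Your proposal is correct and follows essentially the same route as the paper's proof. Both condition on $N$, write the joint probability generating function as $E\bigl[(\sum_{k} q_k z_k)^N\bigr]=\exp\bigl(\lambda(\sum_k q_k z_k-1)\bigr)$, and use $\sum_k q_k=1$ to factor it into independent $Poisson(\lambda q_k)$ generating functions. The paper's separate computation of the marginal pgf of each $X_k$ is already contained in your joint computation.
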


The proof is provided in the appendix. The weighted sum of independent
Poisson model has been studied extensively in the statistical literature, see e.g. \cite{dobson1991confidence,fay1997confidence,fay2017confidence,ng2008confidence,swift2010simulation,tiwari2006efficient}.
Let $q_{\alpha}(Z)$ denote the $\alpha$-quantile of the random
variable $Z$.

Let $(w_{1},\cdots,w_{n})$ be a realization of $\{W_{i}:W_{i}>0\}$
according to \eqref{eq:theta-hat}. Then $x_{k}=\sum_{i=1}^{n}I(w_{i}=w_{k}^{*})$
is the corresponding realization of $X_{k}$ for \eqref{eq:weighted-poisson}
for $k=1,\cdots,K$. Then for a class of Gamma methods, the two-sided
CI for $\theta$ with confidence level $1-\alpha$ can be described by $[q_{\alpha/2}(G_{L}),q_{1-\alpha/2}(G_{U})]$,
where
\begin{itemize}
\item $G_{L}$ follows a Gamma distribution with mean $\sum_{k=1}^{K}w_{k}^{*}x_{k}$
and variance $\sum_{k=1}^{K}w_{k}^{*}{}^{2}x_{k}$, and 
\item $G_{U}$ follows a Gamma distribution with mean $w^{**}+\sum_{k=1}^{K}w_{k}^{*}x_{k}$
and variance $w^{**}{}^{2}+\sum_{k=1}^{K}w_{k}^{*}{}^{2}x_{k}$. 
\end{itemize}
Here $w^{**}$ is a tuning parameter, sometimes referred to as the ``next weight'', for which the original Gamma
CI uses $||w||_{\infty}\equiv w_{K}^{*}$, i.e. the
maximum weight value (\cite{fay1997confidence}), the minimum positive weight (i.e. $w_1^*$) and the average of minimum and maximum weights (i.e. $(w_1^* + w_K^*)/2$) have also been considered in the literature (see \cite{ng2008confidence}), while the modified
Gamma CI (\cite{tiwari2006efficient}) uses $\sqrt{\sum_{k=1}^{K}w_{k}^{*}{}^{2}/K}$ for the variance part and $\sum_{k=1}^{K}w_k^*/K$ for the mean part of $G_U$. The Mid-P Gamma CI (\cite{fay2017confidence}) uses the quantiles of $BG_L + (1-B)G_U$ with $w^{**}$ being the maximum weight, where $B\sim Bernoulli(1/2)$.
Though no theoretical proof is available yet, the original Gamma CI has shown guaranteed coverage (equal or above the nominal level) for all numerical studies
(see e.g. \cite{swift2010simulation,ng2008confidence}).

\begin{table}
\caption{An example where original/modified/mid-p Gamma CIs (lower bound)
all violate the monotonicity property: Category A observes 100 events
all with weight 1 while Category B observes 1 event with weight 100,
both from the same 1M miles. Rounding to the nearest integer, their 90\% two-sided CI lower bounds (ranging from 68 to 81)
for A$\cup$B are smaller than that for A only (84). Note that the weighted Gamma method satisfies monotonicity by returning the lower bound of 103.}\label{tab:counter-example}
\centering{}%
\begin{tabular}{|c|c|c|}
\hline 
 & IPMM for A only  & IPMM for A$\cup$B \tabularnewline
 & (point estimate: 100) & (point estimate: 200) \tabularnewline
\hline 
\hline 
Original Gamma CI \citep{fay1997confidence}  & {[}\textcolor{blue}{84}, 118{]}  & {[}\textcolor{red}{68}, 565{]}\tabularnewline
\hline 
Modified Gamma CI \citep{tiwari2006efficient}  & {[}\textcolor{blue}{84}, 118{]}  & {[}\textcolor{red}{68}, 480{]}\tabularnewline
\hline 
Mid-p Gamma CI \citep{fay2017confidence}  & {[}\textcolor{blue}{84}, 118{]}  & {[}\textcolor{red}{81}, 502{]}\tabularnewline
\hline 
Weighted Gamma CI  & {[}\textcolor{blue}{84}, 118{]}  & {[}\textcolor{blue}{103}, 576{]}\tabularnewline
\hline 
\end{tabular}
\end{table}

Interestingly, in applications, the original Gamma CI can produce counter-intuitive results when we are interested in reporting the rates of two categories (say A and B) of events, as well as their combined rate. For instance, we may want to estimate the total rate of congestion-causing events, but also the rates at which such events are associated with, separately, a delay of less than one minute or a delay of at least one minute. As an illustrative example, suppose we observe 100 events in category A all with weight 1, and 1 event in Category B with weight 100. The point estimates for the rates of events of type A and A$\cup$B will be, respectively 100 and 200.
With the original Gamma method, the lower bound of a 90\% two-sided CI is 84 for Category A; we would expect the lower bound for Category A$\cup$B to therefore be {\em higher} than 84, but instead the original Gamma method yields 68. For the business stakeholder, it would be quite confusing to hear that the rate of 1+ minute congestion events is likely no less than 84 per million miles, but that the rate of {\em all} congestion events might be as low as 68 per million miles.  This intuition is formalized via the following property. 

\begin{defn}
A CI method for rate estimation meets the \emph{monotonicity property}
if its CI bounds increase with the addition of new types of events with weights greater than 0.
\end{defn}

The monotonicity property is intuitive and also important for real
applications due to better interpretability—when two different rates
are summed up, the overall
rate is known to be higher than each individual rate, thus the CI
bounds for the overall rate should be correspondingly higher. In fact, none of the existing Gamma methods have this property, see Table \ref{tab:counter-example} for the counter
example described above. Such violation often happens when the importance sampling quality is uneven, for example, very good at predicting one type of events (i.e. discovered lots of events with small weights), but very hard at predicting another type of events (discovered very few events but with high weights).

\begin{rem}\label{rem:PB-parametric}
Since \eqref{eq:weighted-poisson} is a parametric model where
$X_{k}$ are Poisson, the standard parametric bootstrap procedure
would generate a bootstrap sample as $\hat{\theta}^{(b)}=\sum_{k=1}^{K}w_{k}^{*}X_{k}^{(b)}$,
where $X_{k}^{(b)}\sim Poisson(x_{k})$. Note that $\sum_{k=1}^{K}w_{k}^{*}X_{k}^{(b)}\sim\sum_{i=1}^{n}w_{i}P_{i}^{(b)}$
with $P_{i}^{(b)}\sim Poisson(1)$, i.i.d.. This is equivalent to the Poisson
bootstrap procedure proposed in \cite{bohm2014statistics,chamandy2012estimating} as
mentioned earlier, which produces the confidence interval by the quantiles of ${\hat{\theta}^{(b)}: b=1,\cdots, B}$ for some $B$, say $10^4$. For convenience of reference, Poisson bootstrap is summarized in Table \ref{tab:pb-eb}.
\end{rem}

It is worth pointing out that the Poisson bootstrap CI does satisfy the
monotonicity property. However, it can severely under-estimate the uncertainty
when the observed true positives are rare, which will be demonstrated later in numerical studies in Section 6.

\subsection{Weighted Gamma CI}

Here we propose a weighted Gamma method, which is derived in the spirit
of the fiducial argument as described in \cite{fisher1935fiducial,hannig2016generalized} among others, 
see the detailed derivation in the appendix. The proposed two-sided CI with
confidence level $1-\alpha$ is given by $[L,U]$ with
\begin{align}
L & =q_{\alpha/2}(\sum_{k=1}^{K}w_{k}^{*}G_{k})\label{eq:weighted-gamma-lower-bound}
\end{align}
and
\begin{align}
U & =q_{1-\alpha/2}(w^{**}G_{K+1}+\sum_{k=1}^{K}w_{k}^{*}G_{k})\label{eq:weighted-gamma-upper-bound}
\end{align}
where $G_{1},\cdots,G_{K+1}$ are mutually independent Gamma random
variables, with
\begin{align*}
G_{k} & \sim Gamma(shape=x_{k},rate=1)
\end{align*}
for $k=1,\cdots,K$, and 
\begin{align*}
G_{K+1} & \sim Gamma(shape=1,rate=1).
\end{align*}
Since both $\sum_{k=1}^{K}w_{k}^{*}G_{k}$ and $w^{**}G_{K+1}+\sum_{k=1}^{K}w_{k}^{*}G_{k}$
are weighted sum of independent Gamma random variables, we call this method
``weighted Gamma'', as a new member of the class of Gamma methods (\cite{fay2017confidence}).
For the derivation of \eqref{eq:weighted-gamma-lower-bound} and \eqref{eq:weighted-gamma-upper-bound} in the appendix, we need $w^{**}=||w||_{\infty}=w_{K}^{*}$
(the theoretical maximum weight); however, we also discuss alternative
options later in Section \ref{subsubsection:choice-next-weight}.

\begin{thm}
Under the model \eqref{eq:weighted-poisson}, both Poisson bootstrap and weighted Gamma CI have the monotonicity property as described in Definition 4.
\end{thm}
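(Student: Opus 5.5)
The argument rests on stochastic dominance. Adding a new type of events with positive weights adds a nonnegative independent random variable to the random quantity whose quantile defines each bound. If $Z$ and $D$ are independent and $D\ge 0$ almost surely, then $P(Z+D\le t)\le P(Z\le t)$ for every $t$. Hence $q_\beta(Z+D)\ge q_\beta(Z)$ for every $\beta\in(0,1)$. I will prove this quantile inequality once, with the quantile defined as $q_\beta(Z)=\inf\{t:P(Z\le t)\ge\beta\}$, and then apply it to each method and each bound.

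Fix the data for category A. Write its observed weights as $w_1,\dots,w_n$, or equivalently as counts $x_k$ on the support points $w_k^*$. Category B contributes further positive weights $w_{n+1},\dots,w_{n+m}$. For A$\cup$B the observed weights are the union, so each count $x_k$ either stays the same or increases, and new support points may appear with positive counts.

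For the Poisson bootstrap (Remark \ref{rem:PB-parametric}), the bootstrap statistic for A$\cup$B is $\sum_{i\le n}w_iP_i^{(b)}+\sum_{i=n+1}^{n+m}w_iP_i^{(b)}$ with independent $P_i^{(b)}\sim Poisson(1)$. The second sum is nonnegative and independent of the first, and the first has exactly the distribution of the A-only statistic. The lemma therefore gives monotonicity of both quantiles, and hence of both bounds. If one wants a statement about the finite-$B$ Monte Carlo quantiles rather than the population quantiles, the same conclusion holds pathwise by coupling: reuse the same $P_i^{(b)}$ for $i\le n$, so every bootstrap replicate increases, and so does every order statistic.

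For the weighted Gamma CI, use the fact that $Gamma(x+y,1)$ has the distribution of $G+G'$ with $G\sim Gamma(x,1)$ and $G'\sim Gamma(y,1)$ independent, taking $Gamma(0,1)$ to be the point mass at $0$. Then $\sum_k w_k^*G_k$ for A$\cup$B has the distribution of the A-only sum plus an independent nonnegative term $\sum_k w_k^*G'_k$, where $G'_k\sim Gamma(x_k^{B},1)$. The lemma then gives $L_{A\cup B}\ge L_A$.

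The upper bound is the step that needs care, because it also contains $w^{**}G_{K+1}$. I will work under model \eqref{eq:weighted-poisson}, where the support $\{w_k^*\}$ and hence $w^{**}=w_K^*$ are fixed by the model, the theoretical maximum weight, and do not depend on which events were observed. Under that reading, the term $w^{**}G_{K+1}$ is identical for A and A$\cup$B, and the same decomposition plus the lemma gives $U_{A\cup B}\ge U_A$.

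I should note explicitly that this step would fail if $w^{**}$ were instead taken from the observed data and could change between A and A$\cup$B. The argument survives if the data-dependent choice is nondecreasing under adding events, as with the observed maximum weight, because the extra $(w^{**}_{A\cup B}-w^{**}_A)G_{K+1}$ is still a nonnegative addition. I would state the theorem with the fixed-$w^{**}$ convention and add this remark.
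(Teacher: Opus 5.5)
Your proof is correct and is the stochastic-dominance argument the paper has in mind when it calls the proof obvious and omits it: for A$\cup$B, each bound is a quantile of a sum that equals in distribution the A-only sum plus an independent nonnegative term (by Poisson or Gamma additivity), so every quantile can only go up. The paper's own Table~\ref{tab:counter-example} uses a category-specific next weight ($w^{**}=1$ for A and $w^{**}=100$ for A$\cup$B, which is what gives the upper bounds 118 and 576), so your closing remark about a nondecreasing $w^{**}$ is what actually covers the paper's usage, rather than being a side case.
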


The proof is obvious and thus omitted.

Though the original Gamma method does not have the monotonicity property while weighted Gamma does, numerical studies suggest that
for most cases weighted Gamma performs very similarly to original Gamma. Part of the reason may be that
the distributions which produce the lower (upper) bounds have the same mean $\sum_k w_k^*x_k$ ($w^{**}+\sum_k w_k^*x_k$) and variance $\sum_k w_k^{*2}x_k$ ($w^{**2} + \sum_k w_k^{*2}x_k$) under the two methods.

\subsection{Exponential bootstrap}\label{subsection:eb}

To extend the weighted Gamma method to continuous sampling probabilities,
one may imagine the set $\{w^*_k: k=1,\cdots,K\}$ converging to a finer and finer lattice with $K=\infty$ in Lemma \ref{lem:weighted-poisson}. As the distribution of $W$ tends towards a continuous distribution, intuitively each
$X_{k}$ becomes either 0 or 1. Since the distribution of $Gamma(shape=1,rate=1)$ is 
the same as  $Exponential(1)$, it is natural to construct a CI as follows:
\begin{itemize}
\item $q_{\alpha/2}(\sum_{i=1}^{n}w_{i}e_{i})$ for the lower bound, and
\item $q_{1-\alpha/2}(\sum_{i=1}^{n}w_{i}e_{i}+w^{**}e_{n+1})$ for the
upper bound,
\end{itemize}
where $e_{i}$, $i=1,\cdots,n+1$, are i.i.d. $Exponential(1)$. There
is no closed-form expression for quantiles of the weighted sum of independent exponential
random numbers, but this can be implemented by the Monte Carlo procedure
described in Table \ref{tab:pb-eb}, which we call ``exponential
bootstrap'' (EB), to mimic Poisson bootstrap. 
We also find that a fast algorithm for EB based on the saddlepoint approximation (see the appendix A.3) 
 performs extremely well,
which is useful for large-scale numerical studies.

\begin{rem}
It is also possible to extend traditional Gamma methods described above to the continuous sampling probability setting. We omit the details here, but present these novel methods as alternatives in the numerical studies of Section \ref{sec:sim}. We found that these CIs perform well in many contexts, but similarly to their predecessors fail to satisfy the monotonicity property.
\end{rem}

\begin{table}
\caption{Poisson bootstrap vs Exponential bootstrap with observed weights $\{w_1, \cdots, w_n\}$ (default $B=10^4$)}
\label{tab:pb-eb}

\centering{}%
\begin{tabular}{|c|>{\centering}p{4.7cm}|>{\centering}p{6.5cm}|}
\hline 
  & Poisson bootstrap (PB) & Exponential bootstrap (EB) with next weight $w^{**}$\tabularnewline
\hline 
\hline 
Procedure  & \begin{raggedright}
For $b=1,\cdots,B$: 
\par\end{raggedright}
\begin{enumerate}[(i)]
\item 
\begin{raggedright}
Draw $e_{i}\sim Poisson(1)$, i.i.d. for $1\leq i\leq n$. 
\par\end{raggedright}
\item 
\begin{raggedright}
Calculate: 
\par\end{raggedright}
\begin{itemize}
\item 
\begin{raggedright}
$\hat{\theta}^{(b)}=\sum_{i=1}^{n}w_{i}e_{i}$ if $n>0$ else
0 
\par\end{raggedright}
\end{itemize}
\end{enumerate}
 & \begin{raggedright}
For $b=1,\cdots,B$: 
\par\end{raggedright}
\begin{enumerate}[(i)]
\item 
\begin{raggedright}
Draw $e_{i}\sim Exponential(1)$, i.i.d. for $1\leq i\leq n+1$. 
\par\end{raggedright}
\item 
\begin{raggedright}
Calculate: 
\par\end{raggedright}
\begin{enumerate}[(a)]
\item 
\begin{raggedright}
$\hat{\theta}_{L}^{(b)}=\sum_{i=1}^{n}w_{i}e_{i}$ if $n>0$ else
0 
\par\end{raggedright}
\item \raggedright{}$\hat{\theta}_{U}^{(b)}=\hat{\theta}_{L}^{(b)}+w^{**}e_{n+1}$ 
\end{enumerate}
\end{enumerate}
\tabularnewline
\hline 
\begin{raggedright}Output \par\end{raggedright} & 
\begin{raggedright}$(1-\alpha)$ CI: \par\end{raggedright} \begin{itemize}
\item Lower bound: $\frac{1}{2}\alpha$ quantile of $\{\hat{\theta}^{(b)}:1\leq b\leq B\}$ 
\item Upper bound: $(1-\frac{1}{2}\alpha)$ quantile of $\{\hat{\theta}^{(b)}:1\leq b\leq B\}$. 
\end{itemize}
 & \begin{raggedright}$(1-\alpha)$ CI: \par\end{raggedright} 
\begin{itemize}
\item Lower bound: $\frac{1}{2}\alpha$ quantile of $\{\hat{\theta}_{L}^{(b)}:1\leq b\leq B\}$ 
\item Upper bound: $(1-\frac{1}{2}\alpha)$ quantile of $\{\hat{\theta}_{U}^{(b)}:1\leq b\leq B\}$. 
\end{itemize}
\tabularnewline
\hline 
\end{tabular}
\end{table}

\subsubsection{Choice of next weight $w^{**}$}\label{subsubsection:choice-next-weight}

For importance sampling,  the theoretical maximum weight $||w||_{\infty}$
is often too large even if defensive techniques (e.g. \cite{hesterberg1995weighted, owen2000safe}) are properly used, which would make the EB upper bound too conservative to be useful. Choosing a value of $w^{**}$ which provides the right coverage without being too conservative is difficult in general, as these properties depend on both $\theta$ and nuisance parameters (e.g. those governing the distribution of $W_1$). Motivated by the original Gamma and modified
Gamma methods, we recommend
\begin{align}\label{eq:w**}
w_m & =\max(\{||w||_{2},w_{1},\cdots,w_{n}\})
\end{align}
where $||w||_{2}$ is the square root of theoretical second moment defined by
\begin{align}\label{eq:w2}
||w||_{2} & =\sqrt{E(W_{1}^{2}|W_{1}>0)}.
\end{align}
In other words, $w_m$ is the maximum value of observed maximum weight and $||w||_2$. 

Here $||w||_{2}$ depends on the distribution of unknown true positive probabilities
$r(V_{1})$ and thus needs to be estimated. We recommend an estimation
method below by assuming a relationship between $r(v)$ and $p(v)$:
\begin{align}\label{eq:power-law}
p(v) & \propto r(v)^{\gamma}
\end{align}
where $\gamma$ is the index and $\gamma=1/2$ corresponds to optimal importance sampling in
terms of variance minimization (e.g. \cite{liu2019estimating}), while $\gamma$ close to 0 corresponds to near uniform sampling. 
Equivalently,
this suggests a model
\begin{align*}
r(v) & \propto p(v)^{1/\gamma}
\end{align*}
which may be used to fit the index $\gamma$, yielding an estimate $\hat{\gamma}$, if there is enough properly chosen historical data--otherwise one may set $\hat{\gamma}=1/2$, which per numerical studies often works reasonably well when sampling is not too far from optimal.

Let $\hat{r}(v)\propto p(v)^{1/\hat{\gamma}}$. Recall that  $p(v_{i})$
is only available if segment $i$ is simulated (i.e. $I_{i}=1$), see Remark \ref{rem:h(X)}. Since $E(W_1^2|W_1>0) = E(r(V_1)/p(V_1))/E(r(V_1)p(V_1))$, we may construct
a Hajek style estimator as below:
\begin{align} \label{eq:estimate-2nd-moment}
\hat{E}(W_{1}^{2}|W_{1}>0) & =\frac{\sum_{i:I_{i}=1}s(v_{i})^{-1} \hat{r}(v_i)/p(v_i)}{\sum_{i:I_{i}=1}s(v_{i})^{-1}\hat{r}(v_i)p(v_{i})}
\end{align}
which can be calculated since the unknown scaling factor in $\hat{r}$ cancels out between the numerator and denominator. Of note is that the estimator \eqref{eq:estimate-2nd-moment} makes use of sampling probabilities (and estimated weights) for all run segments that were sampled for simulation, not just those which proceeded to the human review stage.

\section{Numerical Studies}

\label{sec:sim}

In this section we report various statistical simulation studies to evaluate the
CI methods, with a focus on the PB, EB, and extensions of some traditional Gamma methods, for two particular choices of next weight discussed above.
The setup tries to mimic the data collection procedure
described in Section 2 with notations introduced in Section 3. For
simplicity and without loss of generality, we set $s(v)\equiv1$ for all $v$, i.e. all run segments
are simulated, so that multi-stage importance sampling reduces to
single-stage importance sampling. In the appendix A.4, 
we report the results of a two-stage sampling simulation to validate that the findings presented here also generalize to that setting.

Let $f_{1}$ be the density function for the features from true positive
candidate events and $f_{0}$ for the false positive candidates. Let $\pi$
be the proportion of true positives among all candidates. Let $\lambda$
be the expected  number of candidates per million
miles. Recall $r(v)=P(Y=1|v)$, i.e. the probability that a candidate associated
with feature $v$ is a true positive, then by Bayes' theorem $r(v)=\pi f_{1}(v)/(\pi f_{1}(v)+(1-\pi)f_{0}(v))$.
Let $b\in(0,1)$ be the budget ratio for sampling (i.e. smaller $b$
means less human review) and let $p(v)$ be the sampling function.
A statistical simulation scenario is determined
by $(\lambda,\pi,f_{1},f_{0}, p, b)$ with true rate value $\theta=\pi\lambda$. In practice, the connection between $p$ and $r$ is unknown; Below, we first report studies where $p$ is parameterized by $r$ and an index parameter $\gamma$ with varying values according to \eqref{eq:power-law}.

For each scenario, the statistical simulation is performed as below:
\begin{enumerate}[(1)]
\item Generate $N\sim Poisson(\lambda)$ as the total number of run segments with features $(V_{1},\cdots,V_{N})$
and labels $(Y_{1},\cdots,Y_{N})$, where $(Y_{i},V_{i})$ are i.i.d.
according to $Y_{i}\sim Bernoulli(\pi)$ (1 for true positive and
0 for false positive) and $V_{i}\sim f_{1}$ if $Y_{i}=1$, otherwise
$V_{i}\sim f_{0}$.
\item Let $B_{i}\sim Bernoulli(p(V_{i}))$ indicate whether candidate-$i$
is sampled or not, for $i=1,\cdots,N$, where 
$p(V_{i})=\min(1,Nb\times\overline{p}_{i})$
with $\overline{p}_{i} \propto r(V_{i})^{\gamma}$ such that $\sum_i\overline{p}_i=1$. So, roughly $p(v) \propto r(v)^{\gamma}$.
\item Let $W_{i}=\frac{B_{i}}{p(V_{i})}\times Y_{i}$ if $B_{i}=1$ and 0 otherwise.
\item Report the point estimate $\hat{\theta}=\sum_{i=1}^{N}W_{i}$ and
CI for each method.
\end{enumerate}
For each scenario, the procedure is replicated $10^4$ times with different random seeds, and the empirical coverage error (i.e. the fraction of replicates whose CIs fail to cover the true value of $\theta$)
and average CI width for each CI method are reported.

Some type of events are relatively common, and the observed event
counts will be relatively high, while other event types are more rare.
Here, we report results with $f_{1}$ and $f_{0}$
set to the normal density functions for $\mathcal{N}(2,2^{2})$ and $\mathcal{N}(-2,2^{2})$ respectively,
with $\lambda=10^{6}$ and $\pi=10^{-3}$, which implies $\theta=1000$.
We vary the budget ratio parameter $b$ to cover both common and rare
cases. We report results where $b$ ranges from 0 to 0.05, which corresponds
to the expected number of true positives from 0 to 50 under uniform
sampling. 

We use the index parameter $\gamma$ to control the sampling quality, which
is set to either 0.1, 0.5, or 0.9. Note that $\gamma=0.5$ is considered optimal
sampling, for instance in \cite{liu2019estimating}. In the range $\gamma < 0.5$, which approaches simple random sampling at $\gamma=0$, we describe sampling as {\em defensive}. In such a regime, the risk of missing interesting classes of events because of ML model errors is reduced. By contrast, the regime $\gamma > 0.5$ is characterized as {\em greedy} sampling, since it samples more heavily than optimal for candidates with high true positive probabilities and down-samples others.

For EB, we consider two choices of next weight: estimated $||w||_{2}$
and estimated $w_{m}$ as described in Section \ref{subsubsection:choice-next-weight},
for which $\hat{\gamma}$ is set to $\gamma$ (i.e. we assume that there is enough historical data to fit it). The corresponding two EB CIs are
denoted as EB2 and EB2m respectively in the plots below. In addition to comparing PB with the EB methods, we also study the choice of $\hat{\gamma}$ in estimating $w^{**}$, and report results which suggest $\hat{\gamma}=0.5$ performs well
for most cases but can perform poorly for some range of settings.

Similar to the extension from weighted Gamma to EB, other traditional Gamma methods can also be extended to the continuous case by considering infinitely many strata. In this section we  include the comparison with the following two extensions with the same notation as described in Section 5.1: (1) $[q_{\alpha/2}(G_L), q_{1-\alpha/2}(G_U)]$, labeled as GO (extending the \textbf{\textit{O}}riginal \textbf{\textit{G}}amma CI), and (2) using the $\alpha/2$ and $1-\alpha/2$ quantiles of $BG_L + (1-B)G_U$, labeled as GP (extending the Mid-\textbf{\textit{P}} \textbf{\textit{G}}amma CI), both with suffix 2m indicating the next weight $w^{**}$ being the same as that for EB2m; Using the same next weight as for EB2 results in somewhat less reliable coverage and thus is omitted below. 

Note that the monotonicity property is not evaluated here, but as shown before, both EB and PB guarantee the property, while similar to traditional Gamma methods, neither GO nor GP does. GO and GP are included as references regarding the coverage and CI width comparison.

\subsection{Simulation for uniform sampling}

\label{subsec:sim_opt_uniform}
We start with the baseline example $\gamma=0$, which is equivalent
to uniform sampling, in which case all weights are the same and thus
GO2m, EB2 and EB2m are the same, essentially equivalent to the "exact" Poisson CI \citep{garwood1936fiducial}. Figure \ref{fig:v3_scale0} reports the
empirical coverage errors and average CI widths for PB, EB, and GP. The
coverage of EB is more reliable than PB overall while the coverage error for PB can be very large and even close to 100\% when true positive data
is sparse, and both converge to the nominal level when the sampling
rate goes up. EB's CI width is larger than PB when true positive
data is sparse, and the width gets closer between the two methods
when true positive data gets denser. GP is somewhat between PB and EB.

\begin{figure}
\includegraphics[width=0.8\textwidth]{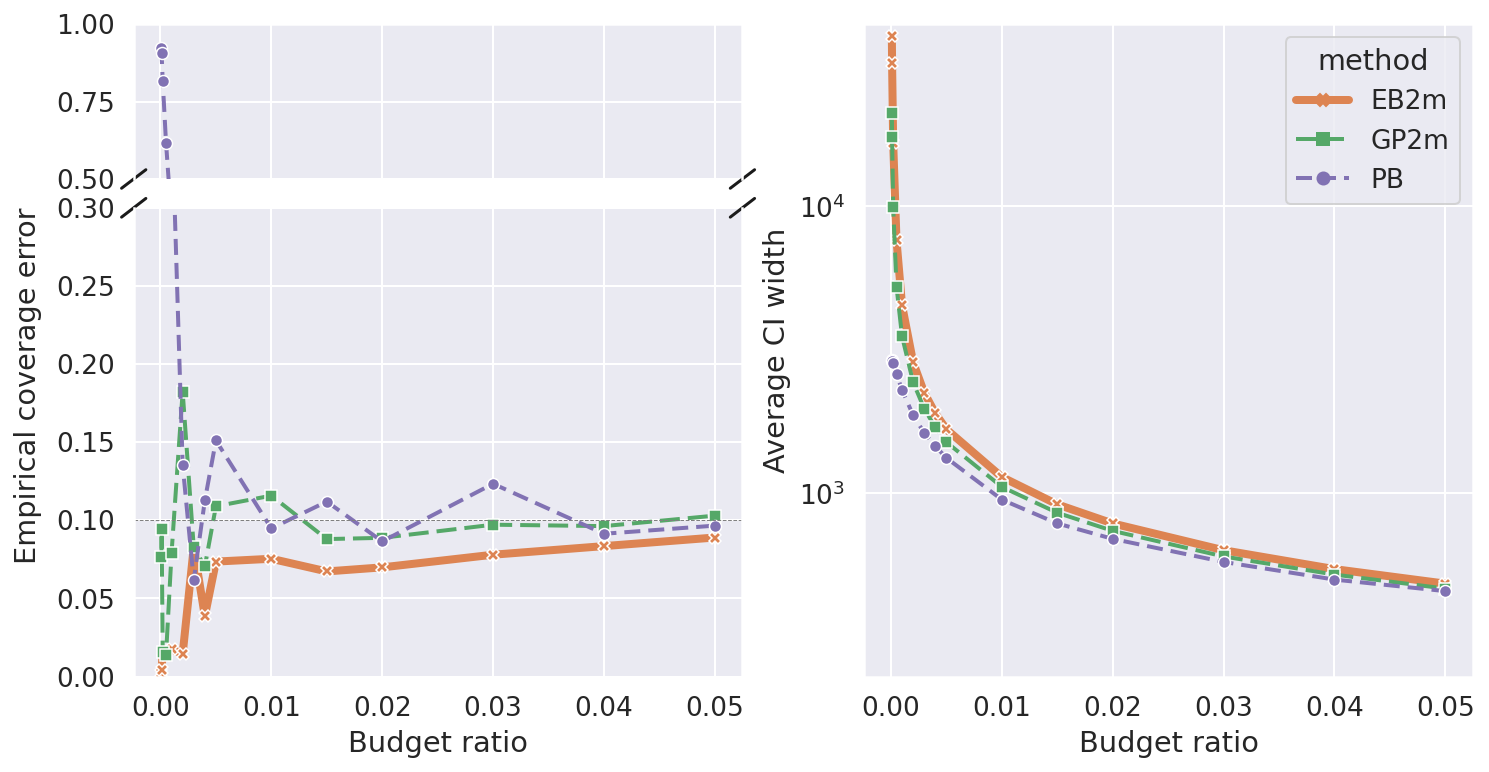} \centering
\caption{ Comparison between PB, GP2m and EB2m in terms of empirical coverage error
(left panel) and average CI width (in log scale, right panel) of $90\%$ two-sided
confidence intervals, where $\gamma=0$, i.e. uniform
sampling, and the budget ratio ranges from 0 to 5\%. Note that GO2m, EB2 and EB2m are identical here.}\label{fig:v3_scale0}
\end{figure}

\subsection{Simulation for importance sampling}

\label{subsec:sim_opt_importance}

Next, we simulate actual importance sampling by varying the value of $\gamma$ to be 0.1, 0.5, or 0.9. For the simulation in this section we set $\hat\gamma$ to be equal to $\gamma$, which is reasonable if there is enough historical data to estimate $\hat\gamma$.

\begin{figure}[hbt!]
  \centering
   \begin{subfigure}[b]{0.8\textwidth}
        \centering
        \caption{$\gamma=0.1$}
    \includegraphics[width=\textwidth]{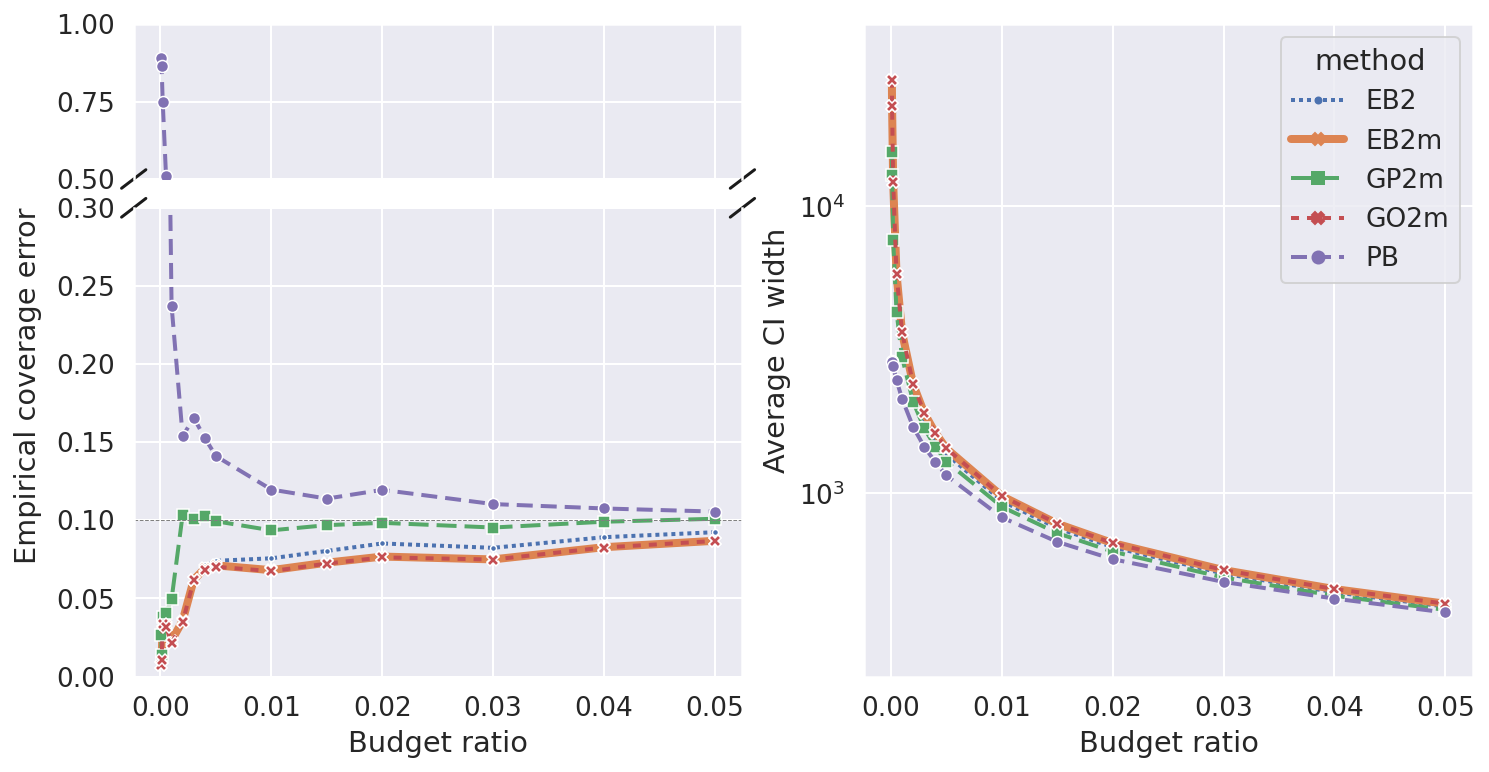}
   \end{subfigure}
   \begin{subfigure}[b]{0.8\textwidth}
         \centering
         \caption{$\gamma=0.5$}
         \includegraphics[width=\textwidth]{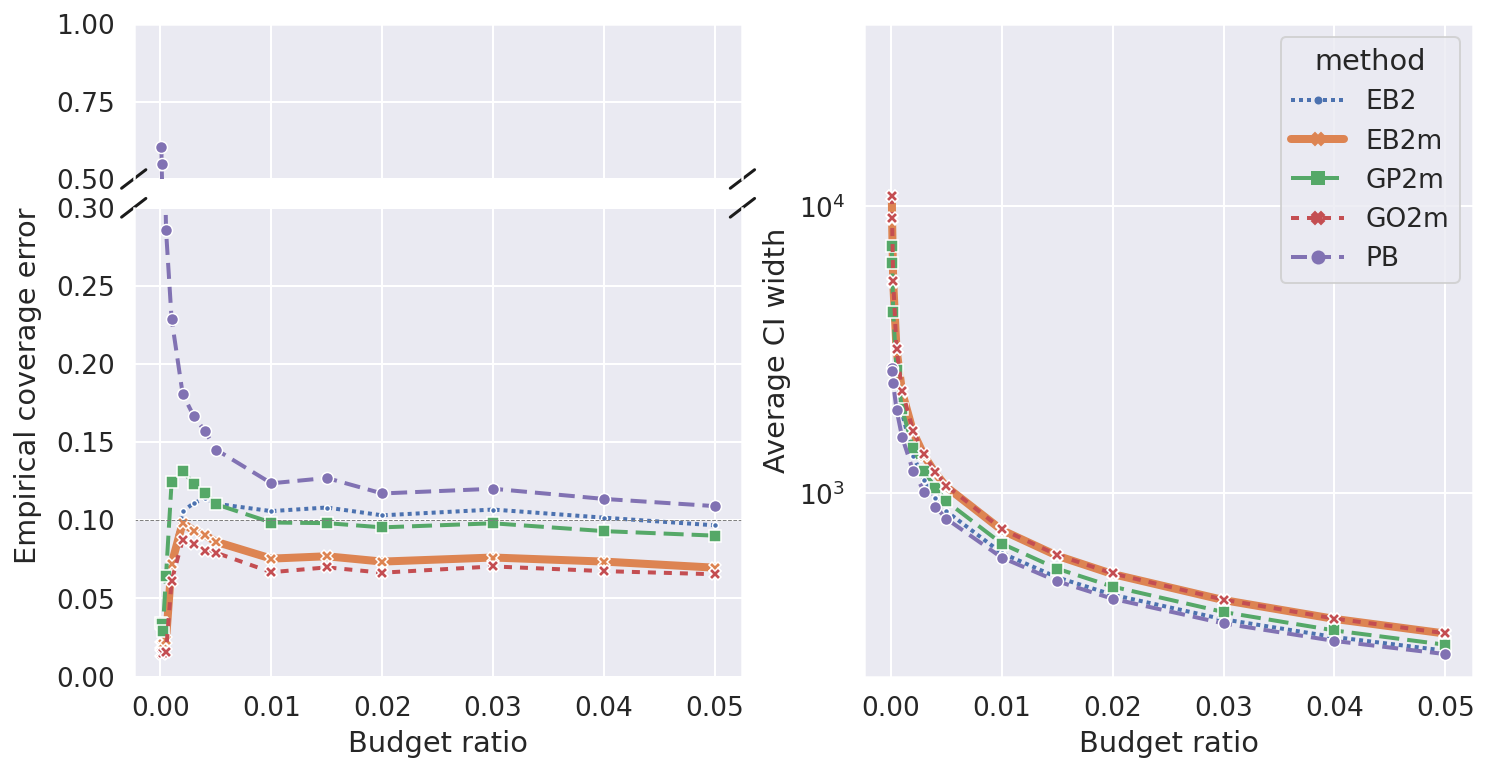}
   \end{subfigure}
   \begin{subfigure}[b]{0.8\textwidth}
         \centering
         \caption{$\gamma=0.9$}
         \includegraphics[width=\textwidth]{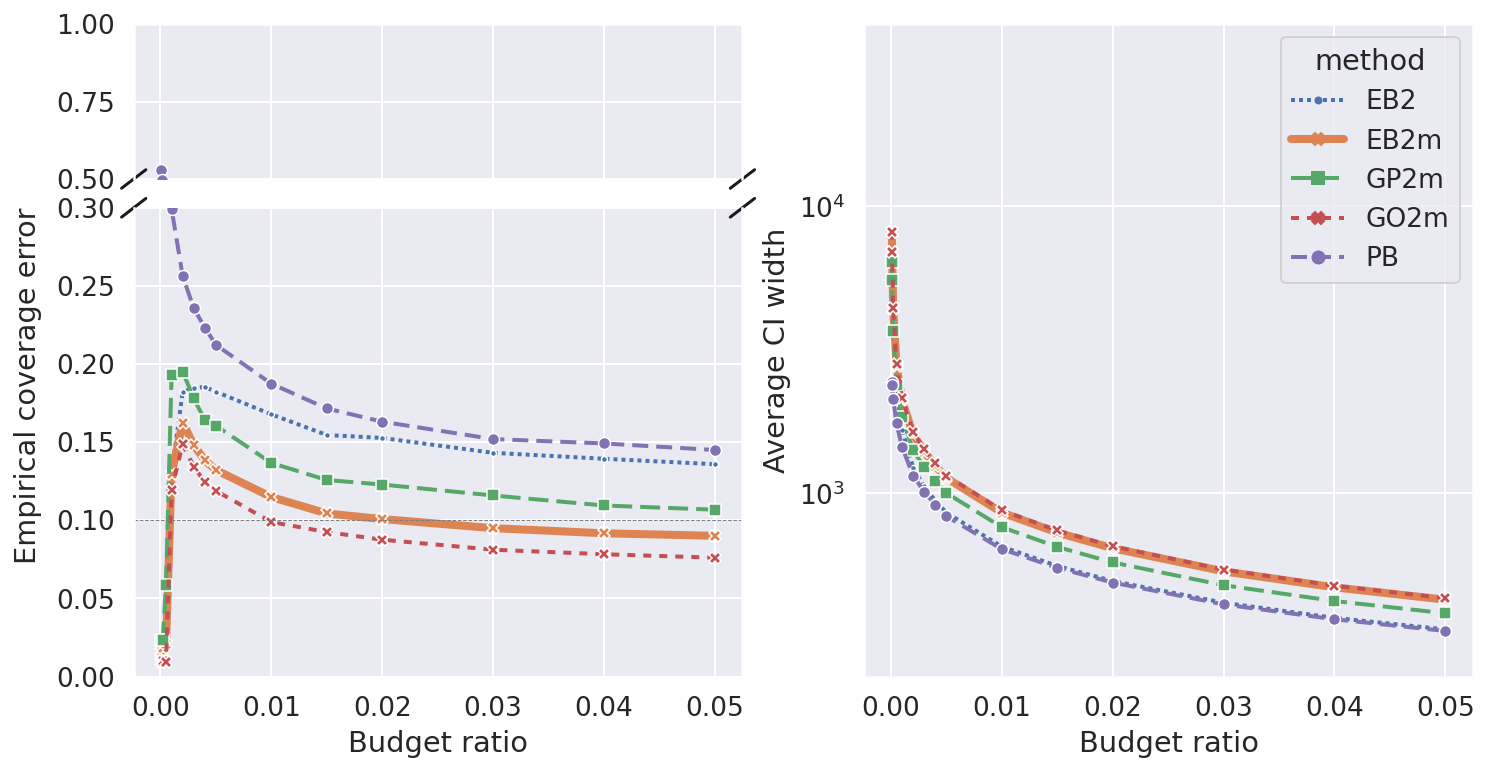}
     \end{subfigure}
  \centering 
  \caption{Comparison between PB, GO2m, GP2m, EB2 and EB2m w.r.t. empirical coverage error (left panels) and average CI width (right panels, in log scale) of $90\%$ two-sided CIs with varying $\gamma$ values, where the budget ratio ranges from 0 to 5\%.}\label{fig:sparse}
\end{figure}

Figure \ref{fig:sparse} reports the performance comparison between PB, GO2m, GP2m, EB2 and EB2m. 
The results show that the coverage error for PB can
go up to almost 100\% as the sampling budget gets close to 0, while EB2m maintains
the nominal coverage for most of the cases. For $\gamma=0.9$, EB2m over-covers for budget close to 0, then somewhat
under-covers but quickly recovers as budget ratio
increases. Nevertheless in sparse cases, EB2m is a clear winner over PB: it maintains
coverage at the cost of a larger CI width, but the width gap
drops fast as budget ratio increases. We note that EB2 performs similarly to EB2m but often has less reliable coverage. We also included the performance of GO and GP with the "winning next weight" (i.e. the one for EB2m instead of EB2) -- GP2m performs the best when $\gamma=0.1$ and for most cases when $\gamma=0.5$ but when $\gamma=0.9$, it significantly undercovers and performs worse than both EB2m and GO2m. Overall, EB2m and GO2m are quite comparable.

In applications where the monotonicity property is not important (for example where there is no desire to estimate rates for disjoint categories of events), the extended mid-point Gamma method (GP2m) provides a good alternative (and even enjoy superior coverage properties) to the Exponential Bootstrap, especially when sampling is defensive.

\subsection{Choice of $\hat{\gamma}$}

In reality, there may not always be enough
data to fit $\hat{\gamma}$; in such cases users may use domain knowledge to set $\hat{\gamma}$, or use 0.5 by assuming the sampling quality to be near optimal. 
However, users should be cautious
with the choice of $\hat{\gamma}=0.5$ without dedicated investigation of the sampling quality. Figure \ref{fig:compare}
compares the performance of EB2m with $\hat{\gamma}=0.5$ and
$\hat{\gamma}=\gamma$, for different values of $\gamma$. The results suggest that $\hat{\gamma}=0.5$ performs
well when $\gamma=0.1$ (closer to uniform sampling), but can perform poorly with empirical coverage errors even above 35\% (while the nominal level is 10\%)
for some small budget ratio in the range from 0.06\% to 0.2\% when $\gamma=0.9$ (much
more greedy than optimal sampling).

\begin{figure}[hbt!]
     \centering
     \begin{subfigure}[b]{0.8\textwidth}
         \centering
         \caption{$\gamma=0.1$}
         \includegraphics[width=\textwidth]{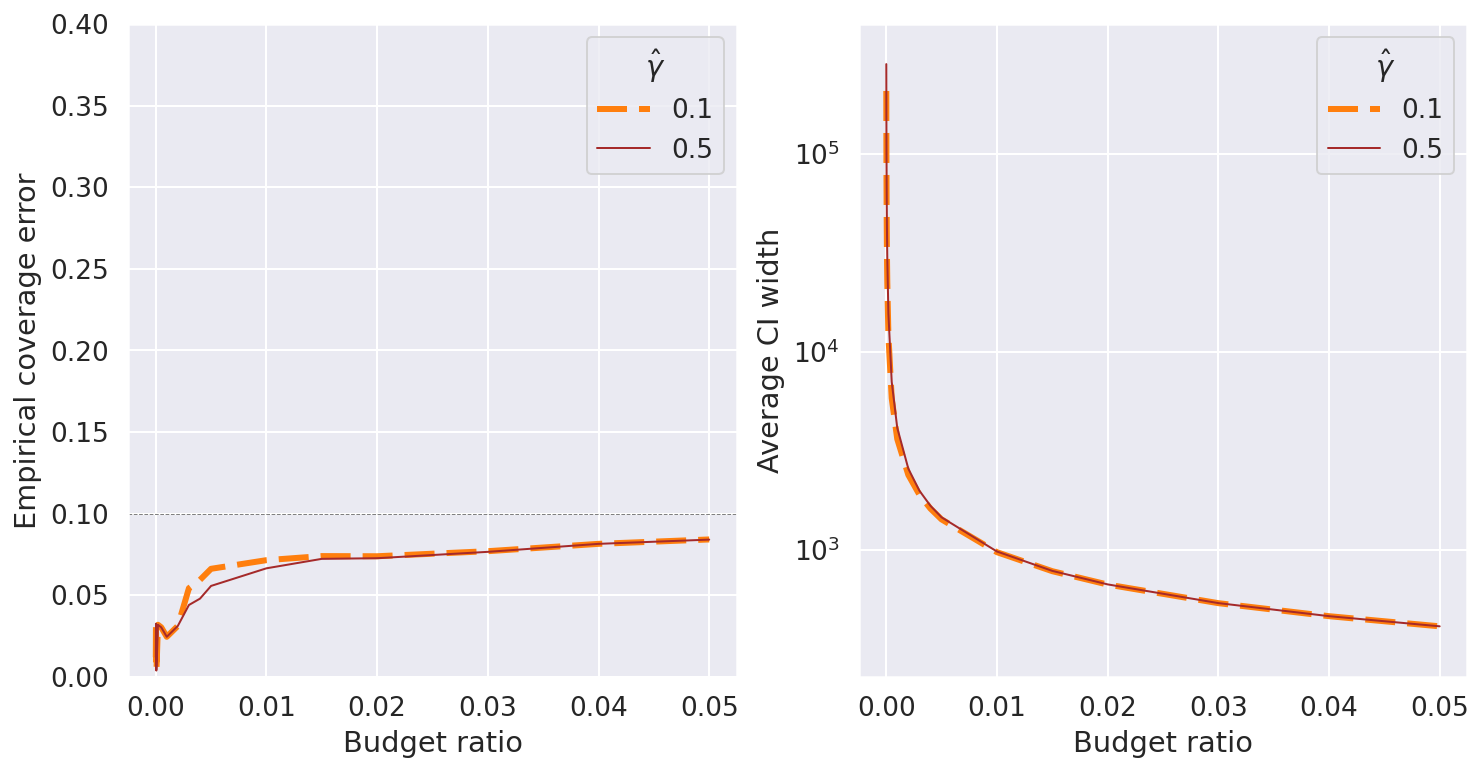}
         
     \end{subfigure}
          \begin{subfigure}[b]{0.8\textwidth}
         \centering
         \caption{$\gamma=0.9$}
         \includegraphics[width=\textwidth]{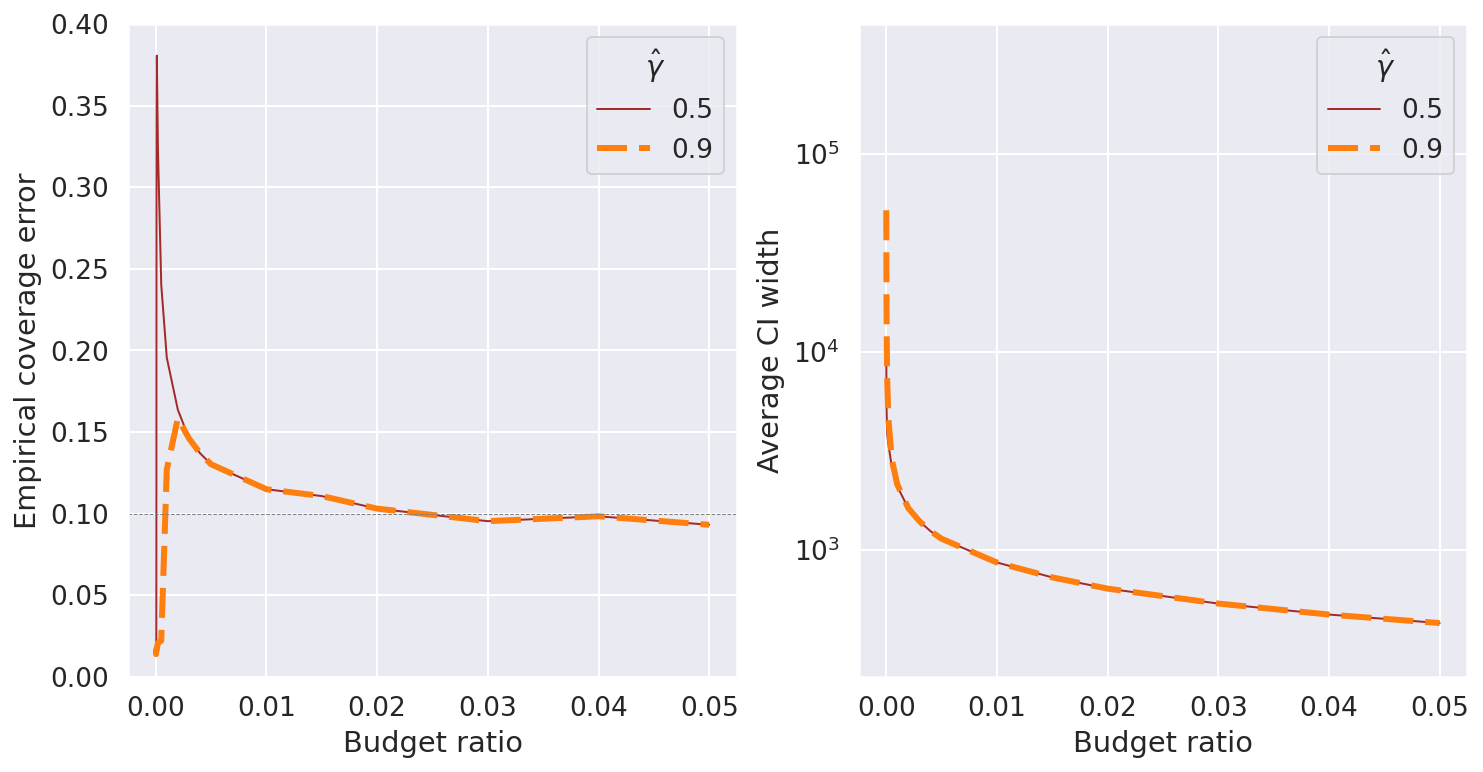}
         
     \end{subfigure}
\caption{Comparison of EB2m between $\hat{\gamma}=0.5$ and $\hat{\gamma}=\gamma$ in terms of empirical coverage error (left panels) and average CI width (right panels, in log scale) of $90\%$ two-sided CIs with varying $\gamma$ values, where the budget ratio ranges from 0 to 5\%.} \label{fig:compare}
     \end{figure}

\subsection{Simulation for "misspecified" models} The preceding studies show that EB2m works well when sampling follows the power relationship, i.e. \eqref{eq:power-law}, and the power index $\gamma$ is not much greater than the optimal value of 0.5. While this relationship may be reasonable (as practitioners can adaptively design and adjust sampling models to approximate it in end-to-end sampling to estimation applications), further investigation indicates that EB2m's performance is quite robust to misspecifications of this relationship. Nevertheless, it can perform poorly when the sampling is considerably more greedy than optimal sampling, unless the budget ratio is large enough; This finding aligns with earlier observations.

Figure \ref{fig:mis} reports the performance comparison under two "misspecified" scenarios:
\begin{enumerate}[(a)]
\item $p(v)  \propto \sqrt{r(v)}(1-r(v))$: In this case, sampling is near optimal for small $r(v)$ but adversarial for $r(v)$ close to 1. 
\item $p(v)  \propto r(v)(1+r(v))$: Here sampling is considerably more greedy than optimal. 
\end{enumerate}
To construct the CIs, $||w||_2$ is estimated according to \eqref{eq:estimate-2nd-moment} assuming $\hat{\gamma}=0.5$. 
The results show that EB2m and GO2m exhibit similar performance under both (a) and (b) while EB2 and PB are nearly identical under (b); PB systematically undercovers under both (a) and (b), on the other hand, the coverage of EB2 and GP2m is very close to the nominal level under (a) for most of the range of budget ratios, but they perform systematically worse than both EB2m and GO2m under (b) for the majority of the range; Furthermore, under model (a) EB2m and GO2m still maintain reasonable coverage even if the budget ratio is small, whereas under model (b), none of the methods do unless the budget ratio is large enough.

\begin{figure}[hbt!]
     \centering
     \begin{subfigure}[b]{0.8\textwidth}
         \centering
         \caption{$p(v)  \propto \sqrt{r(v)}(1-r(v))$}
         \includegraphics[width=\textwidth]{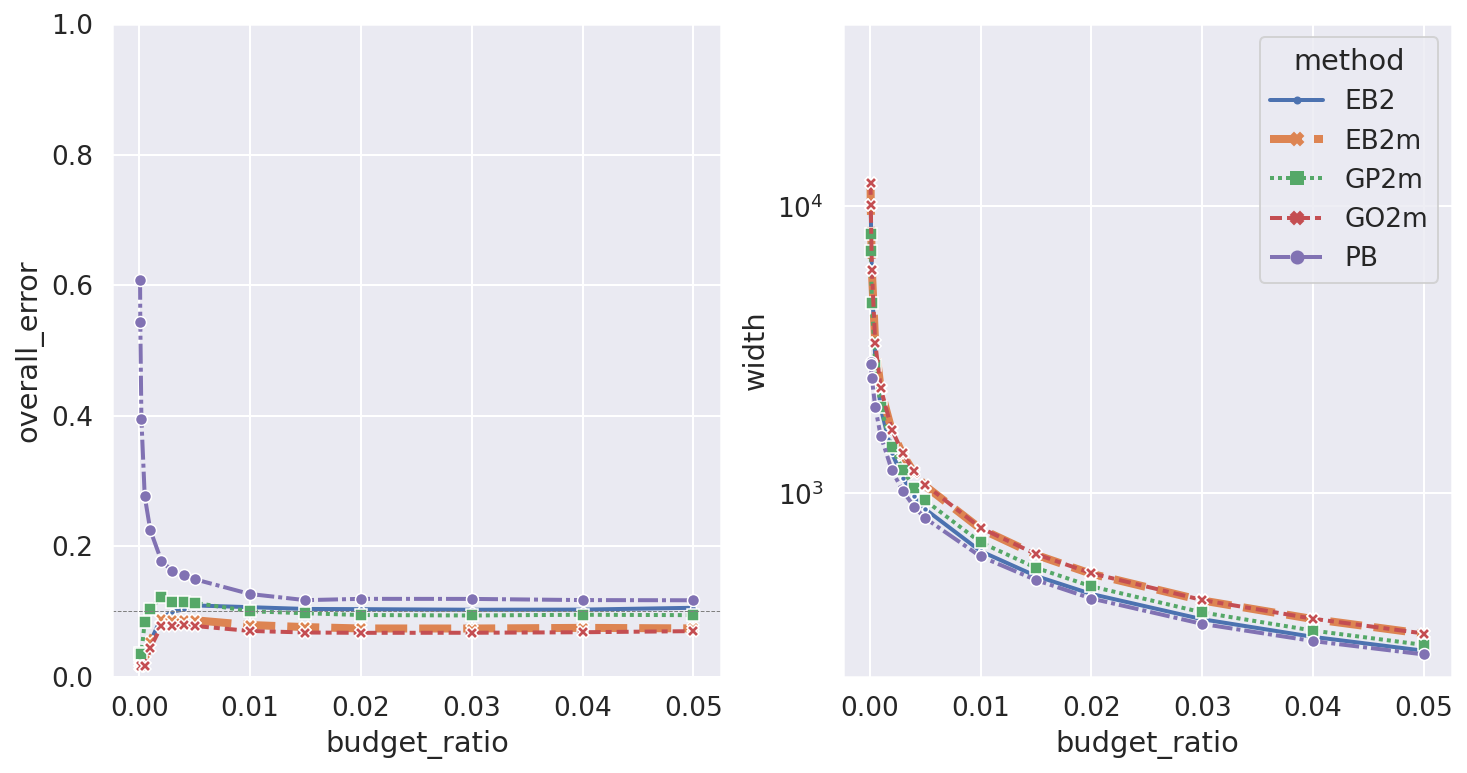} 
         
     \end{subfigure}
          \begin{subfigure}[b]{0.8\textwidth}
         \centering
         \caption{$p(v)  \propto r(v)(1+r(v))$}
         \includegraphics[width=\textwidth]{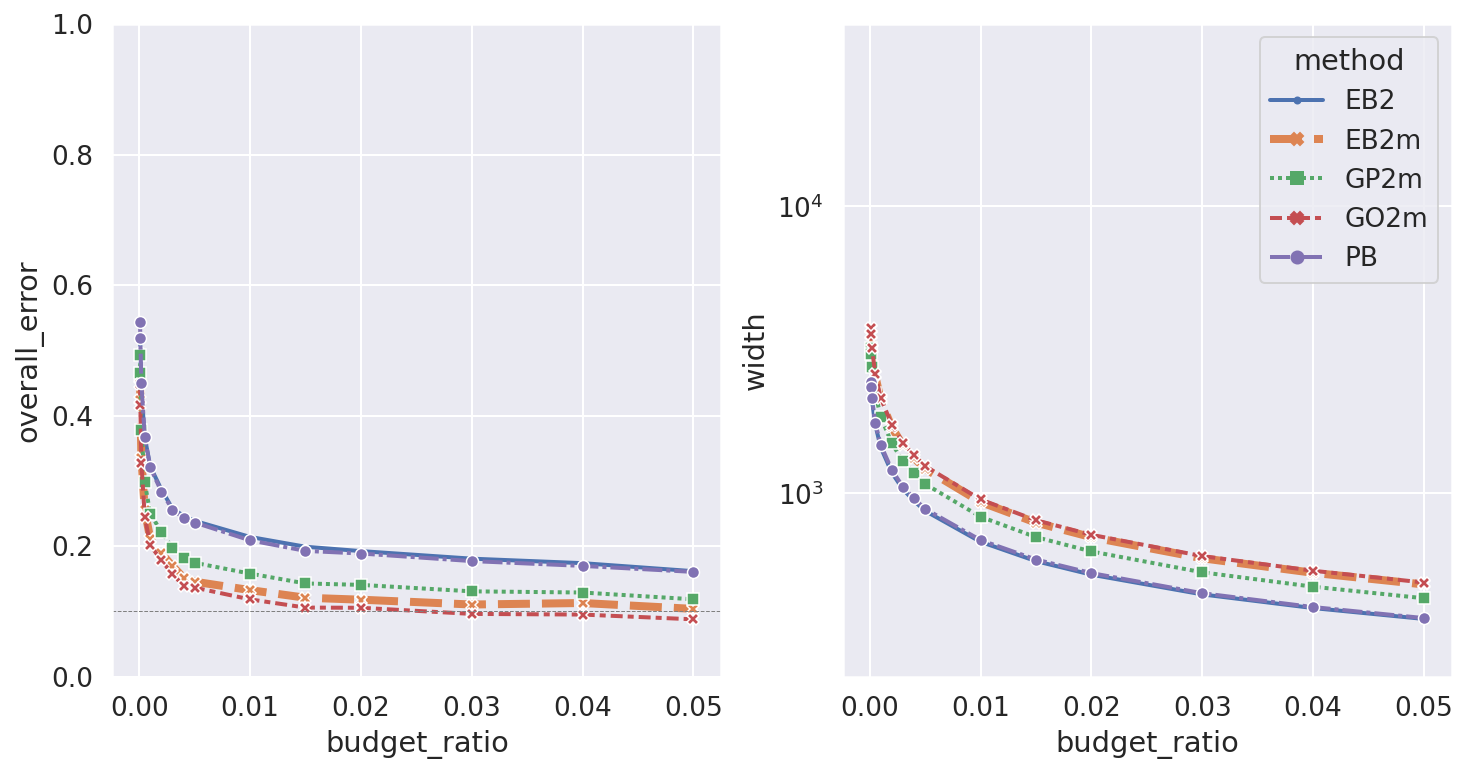} 
         
     \end{subfigure}
\caption{Performance comparison w.r.t. empirical coverage error (left panels) and average CI width (right panels, in log scale) of $90\%$ two-sided CIs, under different relationships between the true event rates $r(v)$ and the sampling probabilities $p(v)$, where EB2 and GP2m are close to each other under (a), EB2 and PB nearly overlap under (b), and EB2m and GO2m perform similarly under both (a) and (b).} \label{fig:mis}
     \end{figure}

\label{subsec:mis}

\section{Real-world data analysis}

In this section, we report some
results from a real case study which consists of millions of
simulations and tens of thousands of human reviews with the budget ratio less than 1\%. We omit a detailed description of the application and actual metric for confidentiality reasons, but the problem structure and data collection follow the framework described in Sections 2-3 above.
For simulations which
produced candidate events and thus were eligible for human review,
the distribution of overall sampling probabilities (i.e. $p(V_i)$) is very heavy-tailed, as reported in Figure \ref{fig:Histograms-of-sampling},
where the vertical axis shows the counts in the log scale (exact y-axis values are hidden). The jumpy portion between 0.8 and 0.9 is likely due to
covariate shift between machine learning training data and this specific
test data, a common phenomenon in machine learning (see, e.g. \cite{sugiyama2007covariate}).

\begin{figure}
\caption{Histogram of overall sampling probabilities for simulation units
which produced candidate events.}\label{fig:Histograms-of-sampling}

\includegraphics[width=3in,totalheight=2in]{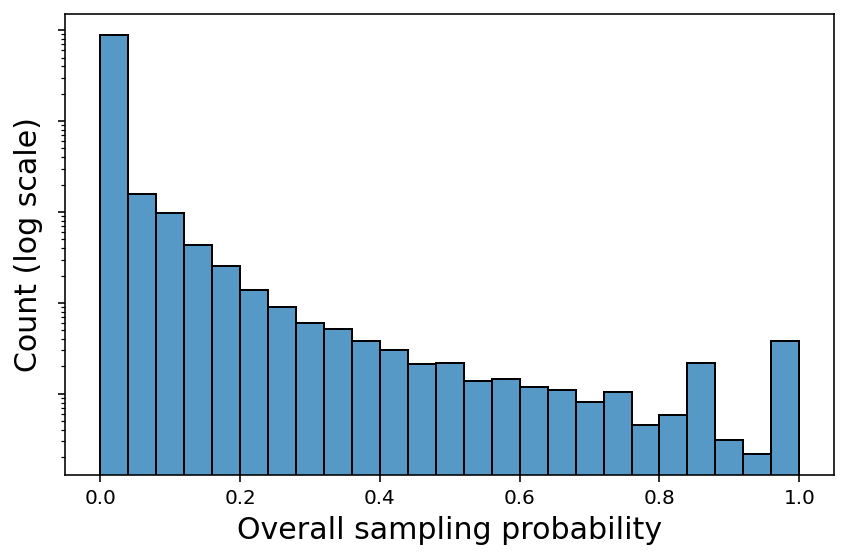}
\end{figure}

In this case study, there were two disjoint groups of events which we denote simply by Category A and Category B. This partitioning of events led to a violation of the monotonicity property by extensions of the traditional Gamma CIs, similar to the
toy example reported in Table 1. In this case, 38 true
positive events were identified from Category A, all having
relatively small sampling weights; on the other hand, a single event with a high weight was identified from
Category B. The exact weights as well as the next weight estimate for this data set are provided
in the appendix. The comparison of CIs between the rate for Category A and
the rate for A$\cup$B is reported for EB2m, PB,
GO2m and GP2m methods in Table \ref{tab:CI-comparison-real-world}.
Note that the rates are rescaled (i.e. without normalization by the actual
mileage) for the purpose of anonymization. 
The results show that for A events only, EB2m is quite close to GO2m for both lower bounds and upper bounds, 
while for A$\cup$B, their upper bounds are quite close but GO2m and GP2m 
have much smaller lower bounds than EB2m. On the other hand,
the upper bounds of PB are significantly smaller than the other three methods. 
It is also evident that
the monotonicity property is violated by the extension of traditional Gamma CIs; for example,
at the 90\% confidence level, GO2m violates this property (its lower bound for A exceeds that for A$\cup$B), while at the
95\% confidence level, both GO2m and GP2m violate this property. 
As expected,
the monotonicity property is always preserved by EB2m. PB also satisfies monotonicity, but our numerical studies suggest that it likely suffers from undercoverage due to heavy-tailed sampling probabilities (and rare events). 

\begin{table}
\caption{CI comparison with real-world examples, where 90\%, 95\% and 99\%
CIs are reported for GO2m, GP2m, Poisson bootstrap (PB) and Exponential
bootstrap CI (EB2m) and the CI bounds are rounded to nearest integers}\label{tab:CI-comparison-real-world}

\begin{tabular}{|c|c|c|}
\hline 
 & rescaled IPMM for A only   & rescaled IPMM for A$\cup$B \tabularnewline
 & (point estimate: 231) & (point estimate: 385) \tabularnewline
\hline 
\hline 
GO2m (90\%)  & {[}\textcolor{blue}{148}, 468{]}  & {[}\textcolor{red}{141}, 2035{]}\tabularnewline
\hline 
GP2m (90\%)  & {[}155, 426{]}  & {[}185, 1792{]}\tabularnewline
\hline 
EB2m (90\%)  & {[}149, 472{]}  & {[}226, 2060{]}\tabularnewline
\hline 
PB (90\%)  & {[}149, 323{]}  & {[}171, 1372{]}\tabularnewline
\hline 

\hline 
GO2m (95\%)  & {[}\textcolor{blue}{135}, 507{]}  & {[}\textcolor{red}{103}, 2322{]}\tabularnewline
\hline 
GP2m (95\%)  & {[}\textcolor{blue}{141}, 468{]}  & {[}\textcolor{red}{134}, 2077{]}\tabularnewline
\hline 
EB2m (95\%)  & {[}137, 524{]}  & {[}202, 2379{]}\tabularnewline
\hline 
PB (95\%)  & {[}134, 344{]}  & {[}157, 1445{]}\tabularnewline
\hline 

\hline 
GO2m (99\%)  & {[}\textcolor{blue}{113}, 590{]}  & {[}\textcolor{red}{51}, 2952{]}\tabularnewline
\hline 
GP2m (99\%)  & {[}\textcolor{blue}{115}, 556{]}  & {[}\textcolor{red}{67}, 2706{]}\tabularnewline
\hline 
EB2m (99\%)  & {[}116, 642{]}  & {[}165, 3094{]}\tabularnewline
\hline 
PB (99\%)  & {[}109, 383{]}  & {[}137, 1814{]}\tabularnewline
\hline 

\end{tabular}
\end{table}

\section{Discussion}

In this paper we introduced the problem of rate estimation with multi-stage importance sampling, an important one in the AV industry, with estimating rates of AV-caused congestion events as a motivating example. We derived a unified semi-parametric compound Poisson formulation, which may be applied more broadly than AV, especially in the era of artificial intelligence (AI). Furthermore, we proposed a novel exponential bootstrap CI method that features a desirable monotonicity property and devised a data-driven choice of the “next weight” parameter. We presented numerical studies showing that EB performs well for a wide range of settings which are relevant to our applications, while some incumbent methods fail to satisfy either coverage or monotonicity requirements; When the monotonicity requirement is not needed, the extended mid-point Gamma method (GP2m) provides a good alternative to EB2m, especially when sampling is defensive.

Our numerical studies suggest that the choice of $\hat\gamma$ or more generally the next weight $w^{**}$ may deserve further study, for example, when the sampling may be sub-optimally designed. This highlights an open problem: how to quantify or diagnose sampling quality. The simple $\gamma$-model we proposed, along with its parameter estimation, offers an initial step towards addressing this issue. For instance, a $\gamma$ value significantly greater than 0.5 corresponds to ``greedy'' sampling, and more greedy sampling (indicated by larger $\gamma$) would increase the discovery of events of interest but at the cost of larger or even infinite variance (equivalent to $||w||_2=\infty$) implying a wide CI in rate estimation. 
In cases of extreme misspecification of the model (5.6) or significant covariate shift, the variance of the Horvitz-Thompson estimator can explode, or $\gamma$ can be negative, leading to uninformative wide CIs.
Another important applied problem is the construction of CIs for the difference between two rates, for which it may be interesting to note that the compound Poisson model can be extended with the weights taking both positive and negative values. 

\section*{Significance Statement}

Evaluating autonomous vehicle (AV) performance requires accurately estimating the rate of very rare events. This challenging problem relies on large-scale data, sampled via complex algorithms from hundreds of millions of miles of real and simulated driving logs. This paper formalizes the estimation problem, develops a new statistical method for rigorous inference, and introduces a novel monotonicity criterion for uncertainty quantification which is logically expected from a causal perspective and provides enhanced interpretability in applied settings. AV evaluation represents a new paradigm: leveraging massive real and synthetic data with heavy sampling to understand long-tail phenomena. We foresee these "needle-in-a-haystack" problems as an emerging theme in applied statistics, and this work provides foundational contributions to the domain.


\section*{Acknowledgements}

We would like to thank Vitya Aleksandrov, Chuanwen Chen, Yin-Hsiu Chen, Yiqun Chen, Kevin Donoghue, Joyce Guo, Lin He, Luna Huang, Peter Lau,
Claire McLeod, Jingang Miao, Ben Sherman, Yuanbiao Wang, Kelvin Wu, Azeem Zaman, and Xiaoyue Zhao for many insightful discussions from problem formulation, technical development to operation, and Aman Sinha for suggesting an elegant binary search technique. We further appreciate Editor Lexin Li, Associate Editor, and anonymous reviewers whose comments have helped improve the paper significantly.

%
%
 


\bibliographystyle{imsart-nameyear} 
\bibliography{mpci_ci.bib}       


\begin{appendix}

\section{}

\subsection{Proof of Lemma 3} 
\begin{proof}
The equality of (5.1) 
follows from basic algebra.
Next, the probability generating function of $X_{k}$ can be written
as
\begin{align*}
E(t^{X_{k}}) & =E(E(t^{X_{k}}|N))=E((E(t^{I(W_{1}=w_{k}^{*})}))^{N})=E((f_{k}t+(1-f_{k}))^{N}).
\end{align*}
Since $N\sim Poisson(\lambda)$, then 
\begin{align*}
E(t^{X_{k}}) & =e^{\lambda(f_{k}t+(1-f_{k})-1)}=e^{\lambda f_{k}(t-1)}
\end{align*}
which coincides with the generating function of $Poisson(\lambda f_{k})$.
Thus $X_{k}\sim Poisson(\lambda f_{k})$.

Finally, to prove that $\{X_{k}:k=0,\cdots,K\}$ are mutually independent,
let's look at the multivariate generating function: 
\begin{align*}
E(\prod_{k=0}^{K}t_{k}^{X_{k}}) & =E((E(\prod_{k=0}^{K}t_{k}^{I(W_{1}=w_{k}^{*})})^{N})=E((\sum_{k=0}^{K}f_{k}t_{k})^{N})=e^{\lambda(\sum_{k=0}^{K}f_{k}t_{k}-1)}.
\end{align*}
Due to $\sum_{k=0}^{K}f_{k}=1$, $E(\prod_{k=0}^{K}t_{k}^{X_{k}})=\prod_{k=0}^{K}e^{\lambda f_{k}(t_{k}-1)}=\prod_{k=0}^{K}E(t_{k}^{X_{k}})$,
which completes the proof since $w_{0}^{*}X_{0}\equiv0$.
\end{proof}

\subsection{Derivation of the weighted Gamma method based on the fiducial argument}

For the reader's convenience, the model (5.1) 
is
the weighted sum of independent Poisson random numbers:
\begin{align*}
\hat{\theta} & =\sum_{k=1}^{K}w_{k}^{*}X_{k}
\end{align*}
where $0<w_{1}^{*}<\cdots<w_{K}^{*}$ are known constants, and $X_{k}\sim Poisson(\lambda_{k})$
with unknown $\lambda_{k}$. We would like to construct a confidence
interval for $\theta=E(\hat{\theta}) = \sum_{k=1}^K w_k^* \lambda_k$.

Our derivation of the lower bound and upper bound follows the spirit
of Fisher's fiducial argument \cite{fisher1935fiducial} which states
that ``\emph{In general, it appears that if statistics $T_{1},T_{2},T_{3},\cdots$
contain jointly the whole of the information available respecting
parameters $\theta_{1},\theta_{2},\theta_{3},\cdots$ and if functions
$t_{1},t_{2},t_{3},\cdots$ of the $T$'s and $\theta$'s can be found,
the simultaneous distribution of which is independent of $\theta_{1},\theta_{2},\theta_{3},\cdots$,
then the fiducial distribution of $\theta_{1},\theta_{2},\theta_{3},\cdots$
 simultaneously may be found by substitution.''} Here we find that the “substitution” idea can be extended to inequalities.

\subsubsection{Some basics for the derivation}

Consider a (homogeneous) Poisson process with rate 1, and let $T(n)$ be the arrival
time of the $n$th event for $n=1,2,\cdots$. With some abuse of notation,
let $X$ be the number of events which fall into $(0,\lambda]$, with $\lambda>0$,
i.e. $X=\sum_{n=1}^{\infty}I(T(n)\leq\lambda)$, then $X\sim Poisson(\lambda)$.
The lemma below (see \cite{feller2008introduction})  summarizes some
statistical properties to be used later.

\begin{lemma}
\label{lem:poisson-process}For any non-negative integer $x$, 

\begin{enumerate}[(i)]
\item $X=x$ if and only if $T(x)\leq\lambda<T(x+1)$. 
\item The marginal distribution of $T(x)$ is Gamma with shape $x$ and
scale 1. 
\item For any $0<\lambda_{1}<\lambda_{2}$, $\sum_{n=1}^{\infty}I(T(n)\leq\lambda_{1})$
and $\sum_{n=1}^{\infty}I(\lambda_{1}<T(n)\leq\lambda_{2})$ are independent
Poisson random variables with mean $\lambda_{1}$ and $\lambda_{2}-\lambda_{1}$. 
\end{enumerate}

\end{lemma}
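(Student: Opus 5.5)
We take $T(n)=\sum_{j=1}^{n}E_j$ with $E_j$ i.i.d. $Exponential(1)$, set $T(0)=0$, and define $X=\sum_{n\ge1}I(T(n)\le\lambda)$. The three parts are then proved in order, each using only this representation and elementary facts about sums of exponentials.

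For (i), note that $T(n)$ is strictly increasing in $n$ almost surely, since each $E_j>0$ almost surely. Hence the set $\{n\ge1: T(n)\le\lambda\}$ is an initial segment $\{1,\dots,X\}$. So $X=x$ means exactly that $T(x)\le\lambda$ and $T(x+1)>\lambda$, which is the claimed event $T(x)\le\lambda<T(x+1)$. The case $x=0$ reads $\lambda<T(1)$ under the convention $T(0)=0$.

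For (ii), with $x\ge1$, $T(x)$ is a sum of $x$ i.i.d. $Exponential(1)$ variables. Its moment generating function is $(1-t)^{-x}$ for $t<1$, which is that of $Gamma(\text{shape}=x,\text{scale}=1)$. Uniqueness of the moment generating function then gives the law. (For $x=0$, the degenerate point mass at $0$ is the natural convention.)

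For (iii), we first show the counting process $X(t)=\#\{n: T(n)\le t\}$ has Poisson marginals. From (i), $P(X(t)=x)=P(T(x)\le t)-P(T(x+1)\le t)$. Using (ii) and the standard integration-by-parts identity for the incomplete Gamma function, this difference equals $e^{-t}t^x/x!$, so $X(t)\sim Poisson(t)$.

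For independence of the counts on $(0,\lambda_1]$ and $(\lambda_1,\lambda_2]$, we compute the joint probability directly. Fix $a,b\ge0$. The event is
\[
T(a)\le\lambda_1<T(a+1)\quad\text{and}\quad T(a+b)\le\lambda_2<T(a+b+1).
\]
We condition on $T(a)=s\le\lambda_1$, which has the $Gamma(a)$ density $s^{a-1}e^{-s}/(a-1)!$. Two facts handle the rest. By memorylessness, $E_{a+1}$ exceeds $\lambda_1-s$, and the overshoot $T(a+1)-\lambda_1$ is again $Exponential(1)$, independent of the past. The subsequent gaps are fresh i.i.d. exponentials. So, conditionally, the process restarted at $\lambda_1$ is again a rate-1 renewal of exponentials, and the count in $(\lambda_1,\lambda_2]$ is $Poisson(\lambda_2-\lambda_1)$ by the marginal result. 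Integrating out $s$ gives
\[
P(\cdot)=P(X(\lambda_1)=a)\,e^{-(\lambda_2-\lambda_1)}(\lambda_2-\lambda_1)^b/b!,
\]
which is the factorization we need.

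The main obstacle is making the restart step rigorous. It requires showing that, given the past up to time $\lambda_1$, the residual $T(a+1)-\lambda_1$ is $Exponential(1)$. We would do this by a direct computation:
\[
P(E_{a+1}>\lambda_1-s+u\mid E_{a+1}>\lambda_1-s)=e^{-u},
\]
combined with independence of $E_{a+1}$ from $T(a)$ and from the later $E_j$. The integral over $s$ then separates cleanly.

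Alternatively, since this is classical, one may cite Feller as the paper does.
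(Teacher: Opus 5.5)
Your proof is correct. The paper gives no proof of this lemma; it labels the facts as classical and points to Feller. So your argument is a self-contained substitute for a citation rather than a different route through an argument in the paper.

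\textbf{What your approach buys.} You fix the concrete construction $T(n)=\sum_{j\le n}E_j$ with i.i.d.\ $Exponential(1)$ gaps. From it you get (i) from monotonicity of $T(n)$, (ii) from the moment generating function, and (iii) from the Gamma-tail/Poisson identity plus a memorylessness restart at $\lambda_1$. This pins down which of the equivalent definitions of a rate-1 Poisson process is in force, which the bare citation leaves implicit. More usefully, your construction yields more than the lemma states, and the extra is exactly what the appendix's upper-limit derivation uses:
\begin{itemize}
\item There the paper invokes the lemma to claim that the increments $T(s_i)-T(s_{i-1})$ and $T(s_K+1)-T(s_K)$ are \emph{mutually independent} Gammas. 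The lemma as stated only gives the marginal law of $T(x)$, whereas in your representation this independence is immediate, since the increments are sums over disjoint blocks of the $E_j$.
\item The paper also applies part (iii) to $K$ consecutive intervals $(\Lambda_{i-1},\Lambda_i]$. The lemma is stated for two intervals only, but your restart argument, iterated $K-1$ times, covers the general case directly.
\end{itemize}
The citation is shorter and inherits full rigor from a standard text. Your version makes the appendix's fiducial substitution step verifiable on its own terms.

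\textbf{One small edge case.} In the conditioning step, $a=0$ has no Gamma density for $T(0)$, since $T(0)=0$ is a point mass. There the event reduces to $E_1>\lambda_1$ together with the count condition on $(\lambda_1,\lambda_2]$. The memorylessness restart then applies directly, without integrating over $s$, and gives the same factorization.
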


\subsubsection{Derivation of the lower confidence limit}

Now back to the model (5.1) 
with $K$ independent
Poisson random variables, we may consider $K$ independent Poisson
processes, and let $\{T_{i}(n):n\geq1\}$ denote the arrival times
of the $i$th Poisson process with rate 1 associated with $X_{i}$, for $i=1,\cdots,K$.

With observations $X_{i}=x_{i}$, by Lemma~\ref{lem:poisson-process}
we have
\begin{align*}
T_{i}(x_{i}) & \leq\lambda_{i}<T_{i}(x_{i}+1).\label{eq:bound-poisson-rate}
\end{align*}
Therefore, 
\begin{align}
\sum_{i=1}^{K}w_{i}^{*}T_{i}(x_{i}) & \leq\sum_{i=1}^{K}w_{i}^{*}\lambda_{i} < \sum_{i=1}^{K}w_{i}^{*}T_{i}(x_i+1)
\end{align}
which suggests that the lower confidence limit of $\sum_{i}w_{i}^{*}\lambda_{i}$
can be constructed by the $\alpha/2$ quantile of $\sum_{i=1}^{K}w_{i}^{*}T_{i}(x_{i})$. By substitution, the distribution of each $T_i(x_{i})$ is an independent Gamma with shape $x_{i}$
and rate 1. This completes the derivation of (5.2).

Note that the second inequality in \eqref{eq:bound-poisson-rate}  suggests an upper bound
by the $1-\frac{\alpha}{2}$ quantile of $\sum_{i=1}^{K}w_{i}^{*}T_{i}(x_{i}+1)$,
which is however too conservative. One may use the fiducial argument with
different choices of statistics to develop different bounds (see e.g.
\cite{stein1959example}), and indeed we have found a much tighter
upper bound in the next subsection.

\subsubsection{Derivation of the upper confidence limit}

To derive a tighter upper bound, instead of considering $K$ Poisson
processes as above, we need to consider a single Poisson process with rate 1. Again,
let $\{T(n):n=1,2,\cdots\}$ denote the arrival times of the events.

Let $\Lambda_{0}=0$ and $\Lambda_{i}=\sum_{j=1}^{i}\lambda_{j}$
for $i=1,\cdots,K$. Then $\Lambda_{1}\leq\cdots\leq\Lambda_{K}$
since $\lambda_{i}\geq0$.

With some abuse of notation, let $X_{i}$ be the number of events
which fall into $(\Lambda_{i-1},\Lambda_{i}]$: 
\begin{align*}
X_{i} & =\sum_{n=1}^{\infty}I(\Lambda_{i-1}<T(n)\leq\Lambda_{i}).
\end{align*}
Since $\lambda_{i}\equiv\Lambda_{i}-\Lambda_{i-1}$, by Lemma \ref{lem:poisson-process},
$\{X_{i},1\leq i\leq K\}$ are mutually independent Poisson with 
\begin{align*}
X_{i} & \sim Poisson(\lambda_{i}),
\end{align*}
which agrees with our original model (5.1).

Let $S_{i}=\sum_{j=1}^{i}X_{j}$ be the number of events which fall into $(0,\Lambda_{i}]$, then we may rewrite $S_{i}=\sum_{n=1}^{\infty}I(T(n)\leq\Lambda_{i})$.

Given the observation $X_{i}=x_{i}$ for $1\leq i\leq K$, let $s_{i}=\sum_{j=1}^{i}x_{j}$,
and $s_{0}\equiv0$.

By Lemma \ref{lem:poisson-process}, $S_{i}=s_{i}$ 
if and only if
\begin{align*}
T(s_{i}) & \leq\Lambda_{i}<T(s_{i}+1).
\end{align*}

Note that $w_{i+1}^{*}-w_{i}^{*}>0$. With some basic algebra, we
have
\begin{align*}
\sum_{i}w_{i}^{*}\lambda_{i} & =w_{K}^{*}\Lambda_{K}-\sum_{i=1}^{K-1}(w_{i+1}^{*}-w_{i}^{*})\Lambda_{i}\\
 & \leq w_{K}^{*}T(s_{K}+1)-\sum_{i=1}^{K-1}(w_{i+1}^{*}-w_{i}^{*})T(s_{i})\\
 & =w_{K}^{*}(T(s_{K}+1)-T(s_{K-1}))+\sum_{i=1}^{K-1}w_{i}^{*}(T(s_{i})-T(s_{i-1}))\\
 & =\sum_{i=1}^{K}w_{i}^{*}(T(s_{i})-T(s_{i-1}))+w_{K}^{*}(T(s_{K}+1)-T(s_{K})).
\end{align*}
Let $G_{i}=T(s_{i})-T(s_{i-1})$ for $i=1,\cdots,K$ and $G_{K+1}=T(s_{K}+1)-T(s_{K}).$
If we plug in $s_i$s as fixed values in the spirit of the fiducial argument, then by Lemma  \ref{lem:poisson-process}, 
$\{T(s_{i})-T(s_{i-1}):1\leq i\leq K\}$
and $T(s_{K}+1)-T(s_{K})$ are mutually independent Gamma, i.e.
\begin{align*}
G_{K+1} & \sim Gamma(shape=1,rate=1),
\end{align*}
and for $i=1,\cdots,K$, due to $s_{i}-s_{i-1}\equiv x_{i}$, 
\begin{align*}
G_{i} & \sim Gamma(shape=x_{i},rate=1).
\end{align*}

The above inequality suggests by the Fiducial argument to set the
upper bound for $\sum_{i}w_{i}^{*}\lambda_{i}$ by the $(1-\alpha/2)$
quantile of $\sum_{i=1}^{K}w_{i}^{*}G_{i}+w_{K}^{*}G_{K+1}$, which
completes the derivation of (5.3).

\subsection{Fast algorithm by saddlepoint approximation}
\label{subsec:fast-eb}

Since EB can be treated as a special realization of the weighted Gamma
CI where each weight appears at least once, we consider the fast approximation
for the weighted Gamma CI where the tuning parameter $w^{**}$ for
next weight is user-specified. Let $(x_{1},\cdots,x_{K})$ be the
observed values associated with $(w_{1}^{*},\cdots,w_{K}^{*})$. For
notation simplicity, we may use $w_{K+1}^{*}\equiv w^{**}$ and
$x_{K+1}=1$. Then the computation of the weighted Gamma CI bounds
requires the computation of the quantiles of two weighted Gamma random
variables in the form of
\begin{align*}
Z & =\sum_{}w_{i}^{*}G_{i}
\end{align*}
where $G_{i}$s are independent random variables with $G_{i}\sim Gamma(shape=x_{i},rate=1)$.

Let $\kappa(t)=\log E(e^{tZ})$. Let $\phi$ and $\Phi$ be the standard
normal density function and cumulative distribution function, respectively.
Let $sign(t)$ be the sign function which is equal to 1 if $t>0$,
-1 if $t<0$, and 0 if $t=0$. The saddlepoint approximation of the
tail distribution for $Z$ \citep{daniels1954saddlepoint,lugannani1980saddle}
can be described as below:
\begin{itemize}
\item If $z=EZ$,
\begin{align*}
P(Z\geq z) & \approx\frac{1}{2}-\frac{\kappa'''(0)}{6\sqrt{2\pi}\sigma^{3}}
\end{align*}
where $\sigma^{2}=\kappa''(0)=var(Z)$;
\item If $z\neq EZ$,
\begin{align*}
P(Z\geq z) & \approx1-\Phi(\xi)+\phi(\xi)(\omega^{-1}-\xi^{-1})
\end{align*}
where 
\begin{align*}
\kappa'(t^{*}) & =z\\
\omega & =t^{*}\sqrt{\kappa''(t^{*})}\\
\xi & =sign(t^{*})\sqrt{2(t^{*}z-\kappa(t^{*}))}.
\end{align*}
\end{itemize}
Let
\begin{align*}
\omega(t) & =t\sqrt{\kappa''(t)}\\
\xi(t) & =sign(t)\sqrt{2(t\kappa'(t)-\kappa(t))}\\
f(t) & =1-\Phi(\xi(t))+\phi(\xi(t))(\frac{1}{\omega(t)}-\frac{1}{\xi(t)}).
\end{align*}
To find the quantile $z$ such that $P(Z\geq z)=\alpha$, the algorithm
works as below:
\begin{enumerate}[(i)]
\item Solve $t^{*}\in(-\infty,\frac{1}{\max_{i}(w_{i}^{*})})$ such that
$f(t^{*})=\alpha$;
\item Set $z=\kappa'(t^{*})$.
\end{enumerate}
For our specific case, after some basic algebra, we have
\begin{align*}
\kappa(t) & =\sum_{i}-x_{i}\log(1-w_{i}^{*}t)
\end{align*}
\begin{align*}
\kappa'(t) & =\sum_{i}\frac{w_{i}^{*}x_{i}}{1-w_{i}^{*}t}
\end{align*}
\begin{align*}
\kappa''(t) & =\sum_{i}\frac{w_{i}^{*2}x_{i}}{(1-w_{i}^{*}t)^{2}}
\end{align*}
\begin{align*}
\kappa'''(t) & =\sum_{i}\frac{2w_{i}^{*3}x_{i}}{(1-w_{i}^{*}t)^{3}}.
\end{align*}
where $t<\frac{1}{max_{i}(w_{i}^{*})}$. Then $\kappa'(0)=\sum_{i}w_{i}^{*}x_{i}$,
$\sigma^{2}=\kappa''(0)=\sum_{i}w_{i}^{*2}x_{i}$ and $\kappa'''(0)=2\sum_{i}w_{i}^{*3}x_{i}$.

A simple binary search can solve $f(t^{*})=\alpha$ quickly and reliably.
To identify the left and right bounds $[t_{L},t_{R}]$ for bisection,
Dr. Aman Sinha suggests an efficient method as below: Initialize $t=-1$,
iterate $t\leftarrow2\times t$ until $f(t)<\alpha$, then set $t_{L}=t$; Finally
set $t_{R}=t_{L}/2$ if $t_{L}<-1$ otherwise $t_{R}=1/\max_{i}(w_{i}^{*})$.

Our numerical studies (see Figure \ref{fig:Saddlepoint-approximation-example}
as an illustration example) suggest that the saddlepoint approximation
works extremely well, most likely due to some nice property of weighted
Gamma, which is consistent with numerical discoveries in the literature,
e.g. \cite{hesterberg1994saddlepoint} showed numerically that the
saddlepoint approximation works extremely well for standard Gamma.

We also verified in Figure \ref{fig:eb_vs_saddle1} and \ref{fig:eb_vs_saddle2}  that for the range of parameters used in numerical studies, EB2m (with default $B=10^4$) and its saddlepoint approximation perform very similarly.

\begin{figure}

\caption{Illustration of the saddlepoint approximation accuracy  for the weighted Gamma
distribution $G_{100}+10\times G_{10}+100\times G_1$, where $G_{100}, G_{10}, G_1$ are independent Gamma random variables with rate 1 and shape 100, 10, 1 respectively, 
the curve is the cumulative distribution function (CDF) based on Monte-Carlo
with sample size $10^{6}$, and the cross-shaped markers are based on the saddlepoint approximation.}\label{fig:Saddlepoint-approximation-example}

\includegraphics[width=4in]{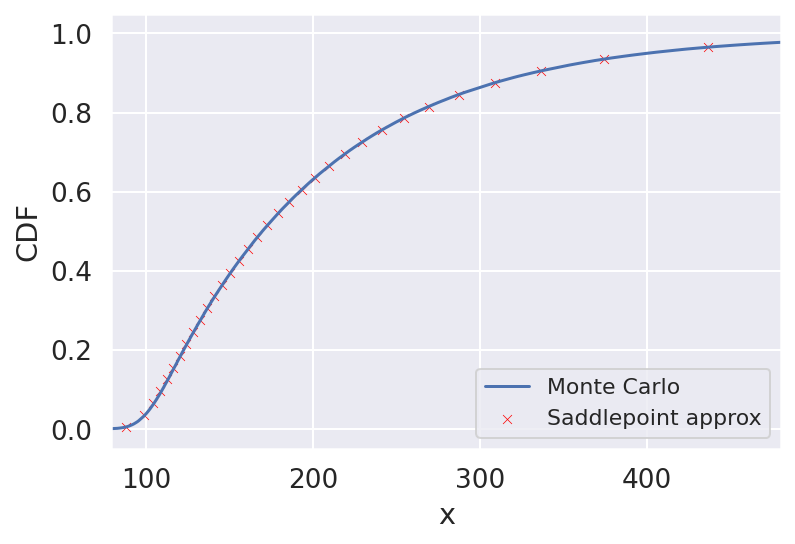}

\end{figure}

\begin{figure}[hbt!]
  \centering
   \begin{subfigure}[b]{0.8\textwidth}
        \centering
        \caption{$\gamma=0.1$}
    \includegraphics[width=\textwidth]{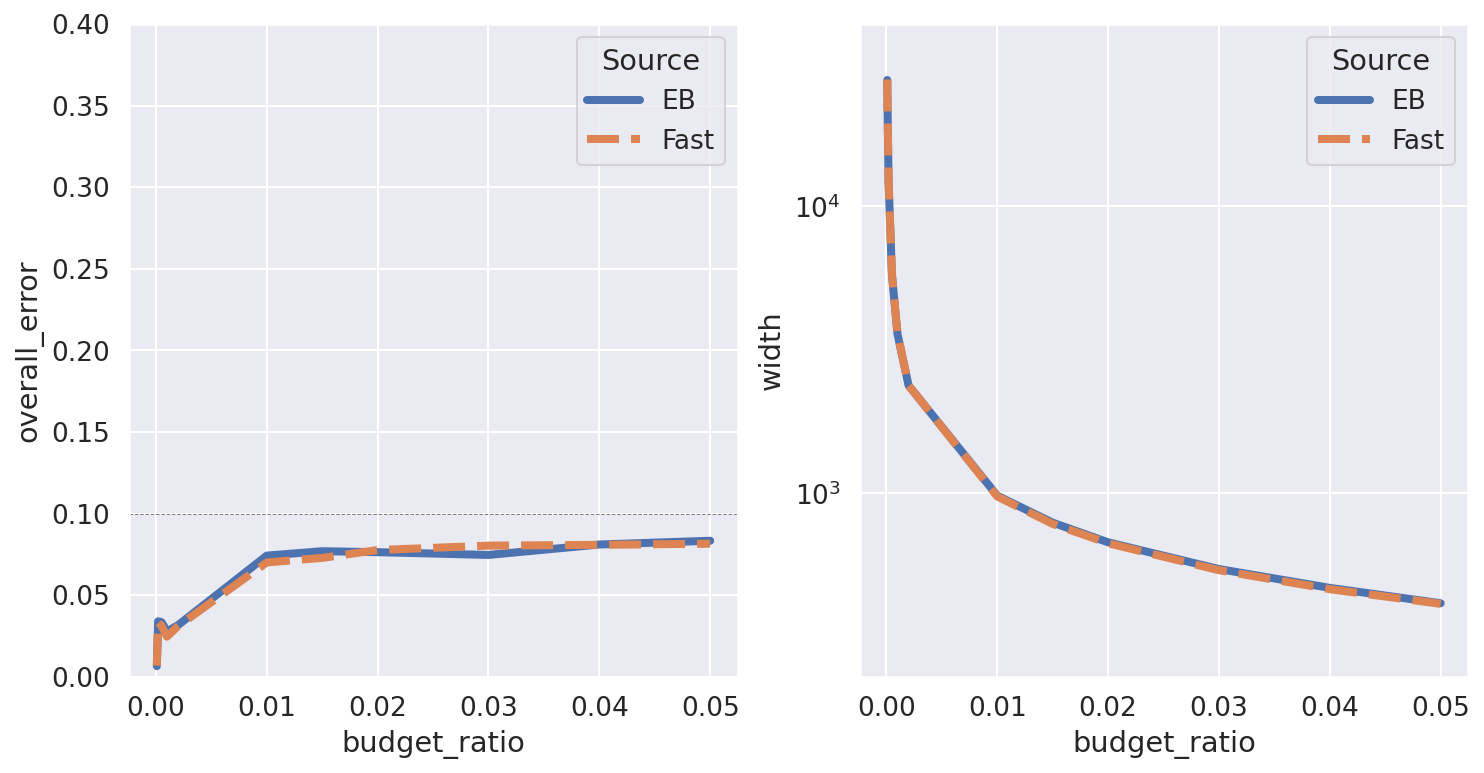}
   \end{subfigure}
   \begin{subfigure}[b]{0.8\textwidth}
         \centering
         \caption{$\gamma=0.5$}
         \includegraphics[width=\textwidth]{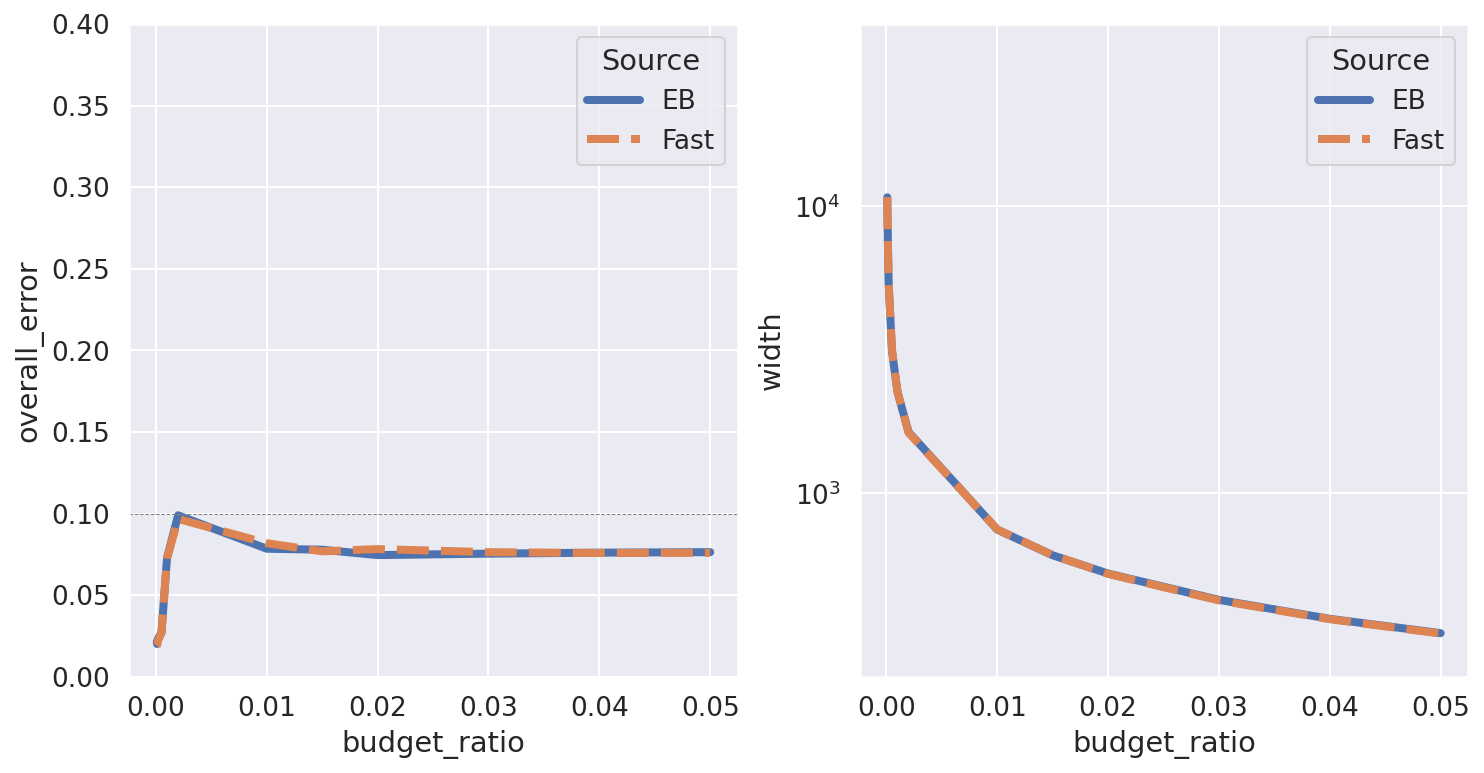}
   \end{subfigure}
   \begin{subfigure}[b]{0.8\textwidth}
         \centering
         \caption{$\gamma=0.9$}
         \includegraphics[width=\textwidth]{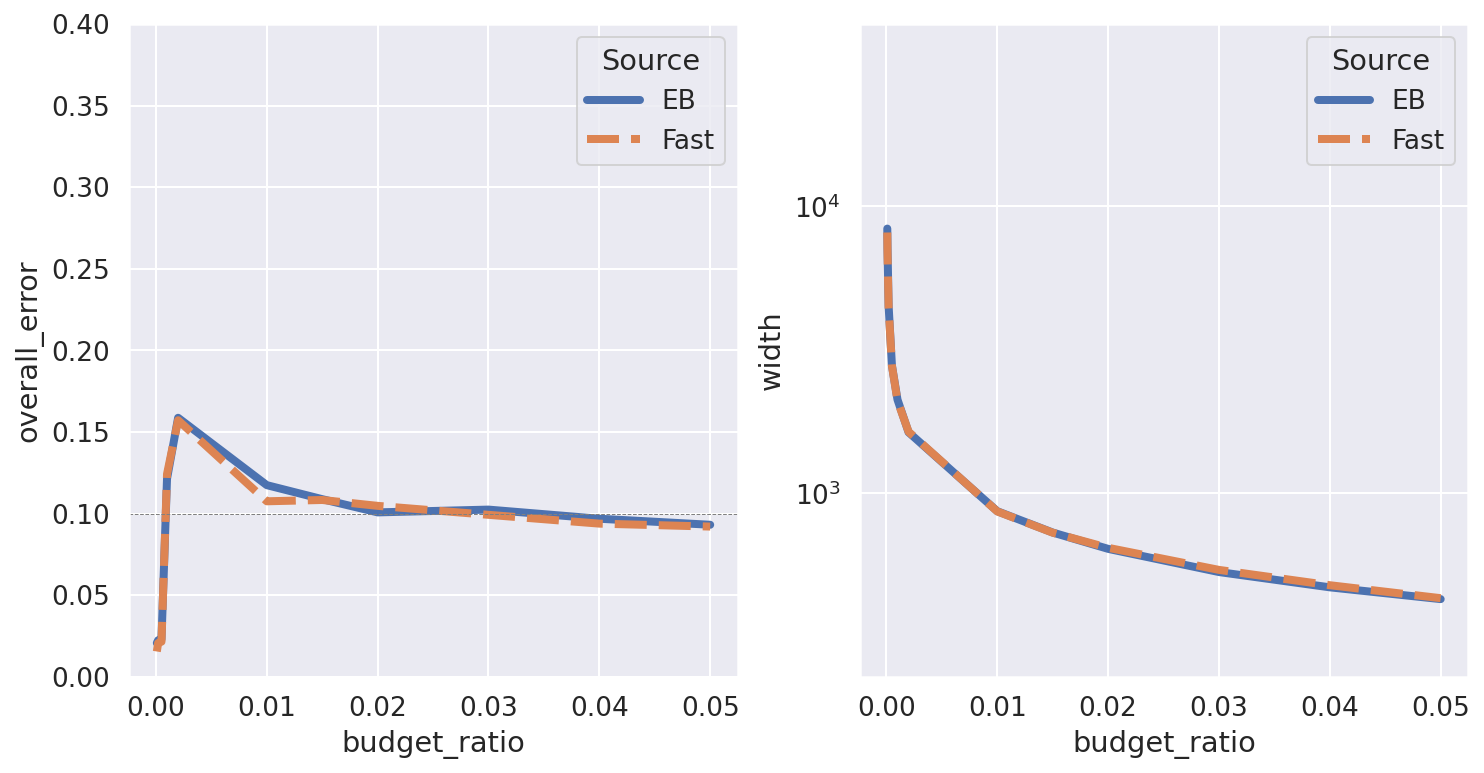}
     \end{subfigure}
  \centering 
  \caption{Comparison between EB2m (labeled as EB) and its saddlepoint approximation (labeled as Fast) w.r.t. empirical coverage error (left panels) and average CI width (right panels, in log scale) of $90\%$ two-sided CIs with varying $\gamma$ values and $\hat{\gamma}=\gamma$, where the budget ratio ranges from 0 to 5\%, same setting as Figure 6.2 
  in the paper.}\label{fig:eb_vs_saddle1}
\end{figure}

\begin{figure}[hbt!]
     \centering
     \begin{subfigure}[b]{0.8\textwidth}
         \centering
         \caption{$\gamma=0.1$}
         \includegraphics[width=\textwidth]{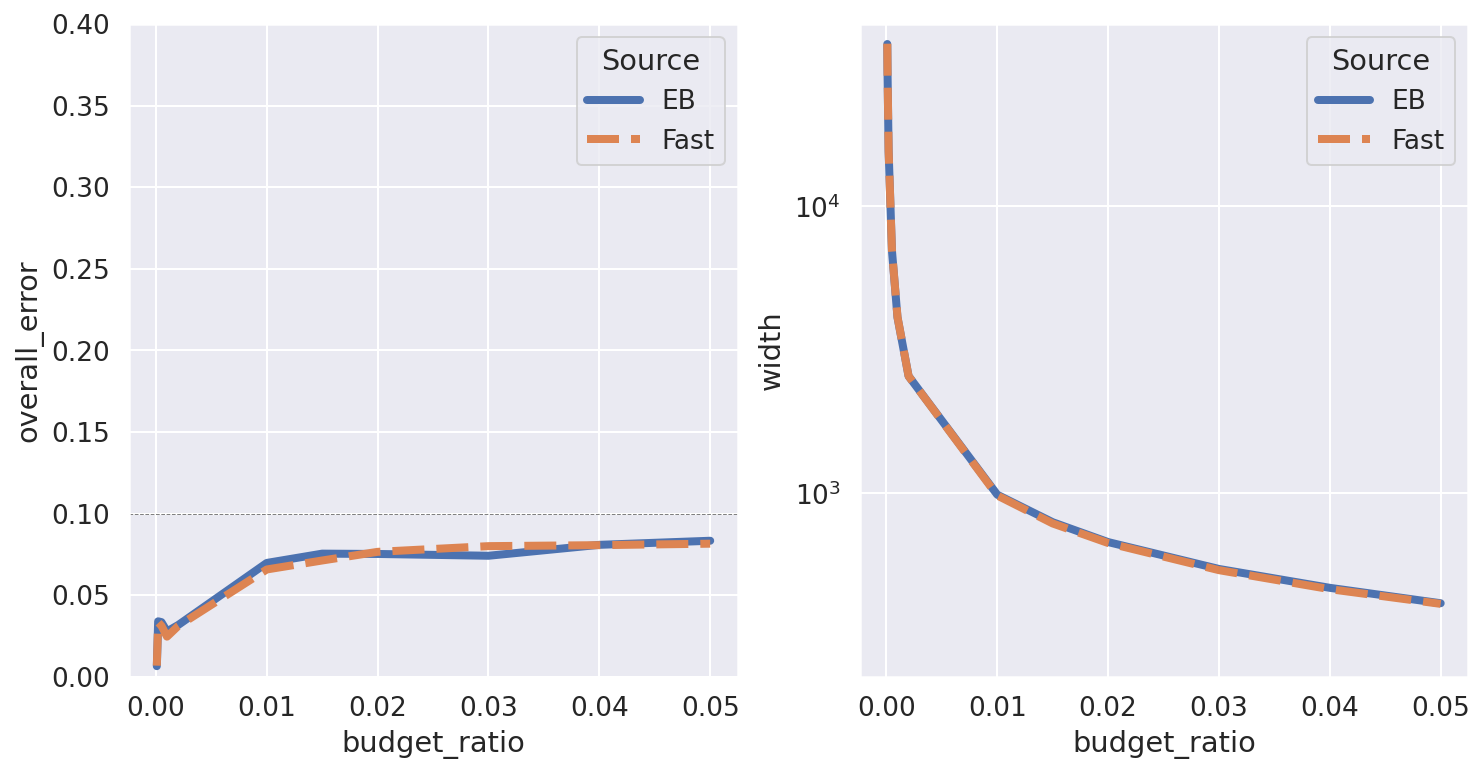}
         
     \end{subfigure}
          \begin{subfigure}[b]{0.8\textwidth}
         \centering
         \caption{$\gamma=0.9$}
         \includegraphics[width=\textwidth]{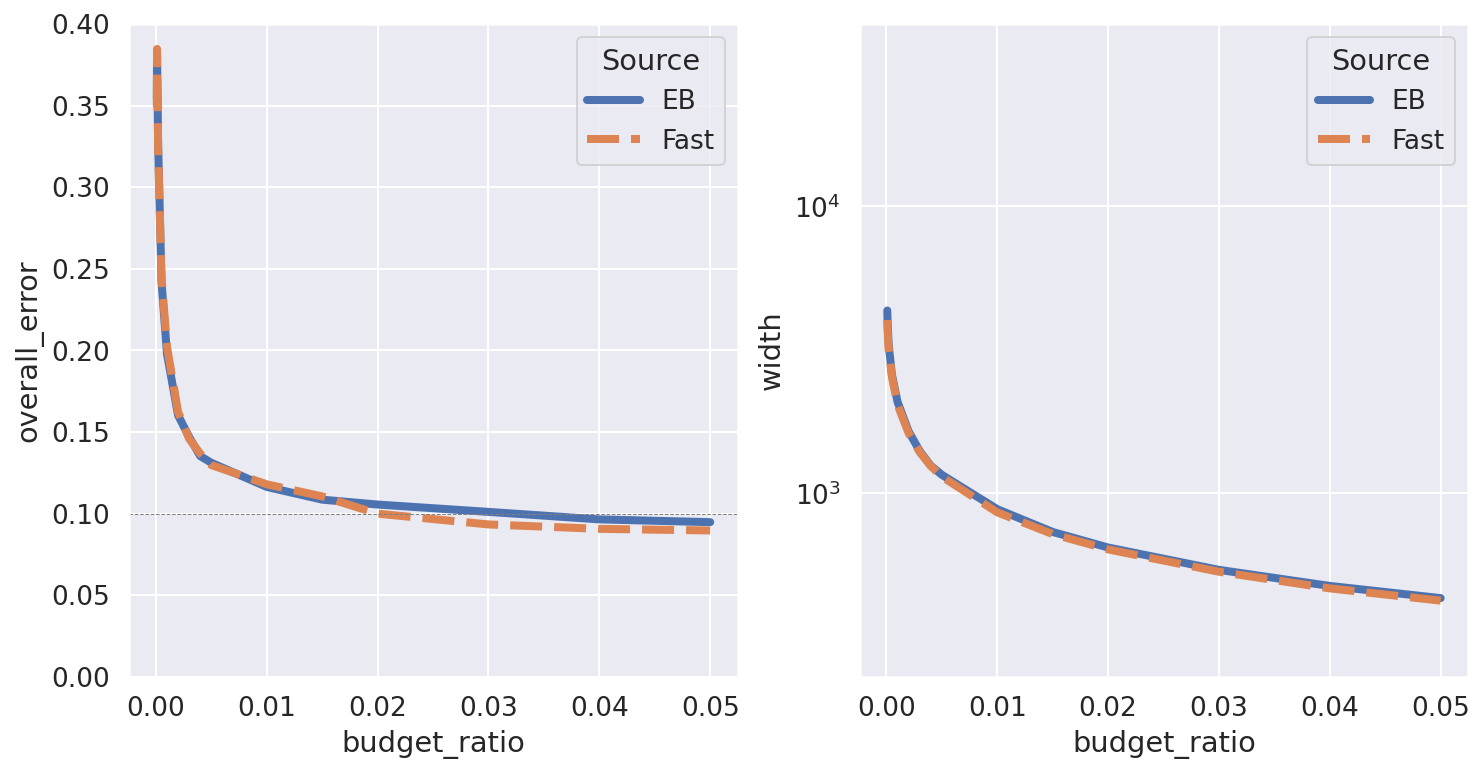}
         
     \end{subfigure}
\caption{Comparison of EB2m (labeled as EB) and the saddlepoint approximation (labeled as Fast) with $\hat{\gamma}=0.5$ in terms of empirical coverage error (left panels) and average CI width (right panels, in log scale) of $90\%$ two-sided CIs with varying $\gamma$ values, where the budget ratio ranges from 0 to 5\%, same setting as Figure 6.3 
in the paper.} \label{fig:eb_vs_saddle2}
     \end{figure}

\subsection{Simulation for two-stage importance sampling}

To simulate two-stage importance sampling, the steps are similar to (1)-(4) as described in Section 6, 
except that step (2) is replaced by (2') below, where the same population configuration i.e. $(\lambda, \pi, f_1, f_0)$ (which also determines $r$) as described above is used:

\begin{enumerate}
\item[(2')] Let $I_{i}\sim Bernoulli(s(V_{i}))$ indicate whether candidate-$i$ is sampled for the first stage for $1\leq i\leq N$, where $s(V_i)=\min(1, Nb_1\times \overline{s}_i)$ with  $\overline{s}_i \propto r(V_i)^{\gamma_1}$ s.t. $\sum_i\overline{s}_i=1$. For candidates sampled for the first stage, i.e. $I_i=1$, let $J_{i}\sim Bernoulli(h(V_{i}))$  indicate whether it is sampled for the second stage, where  $h(V_i)= \min(1, N_1b_2\times \overline{h}_i)$  with $N_1=\sum_iI_i$ and $\overline{h}_i \propto r(V_i)^{\gamma_2}$ s.t. $\sum_i \overline{h}_i=1$. Let $B_i=I_iJ_i$, then $B_i\sim Bernoulli(p(V_i))$ with $p(V_i)=s(V_i)h(V_i)$. So roughly $p(v)\propto r(v)^{\gamma}$ with $\gamma=\gamma_1+\gamma_2$.
\end{enumerate}

The parameters $b_1, b_2$ are the budget ratios for the first-stage and second-stage respectively. We set $b_1=0.1$ and let $b_2$ range from 0 to 0.5 so that the overall budget ratio $b_1b_2$ is from 0 to 0.05, similar to earlier simulations. The parameters $\gamma_1, \gamma_2$ decide how defensive or greedy the two-stage sampling is. 
Figure \ref{fig:two_stage_025_025_05} reports the performance comparison between PB, GO2m, GP2m, EB2 and EB2m with $\gamma_1=\gamma_2=0.25$. 
The results show similar trends of performances to the one-stage importance sampling. The coverage error for PB goes above 50\% as the sampling budget gets close to 0, while EB2m, GO2m and GP2m maintain the nominal coverage for most of the range. 

\begin{figure}
\includegraphics[width=0.8\textwidth]{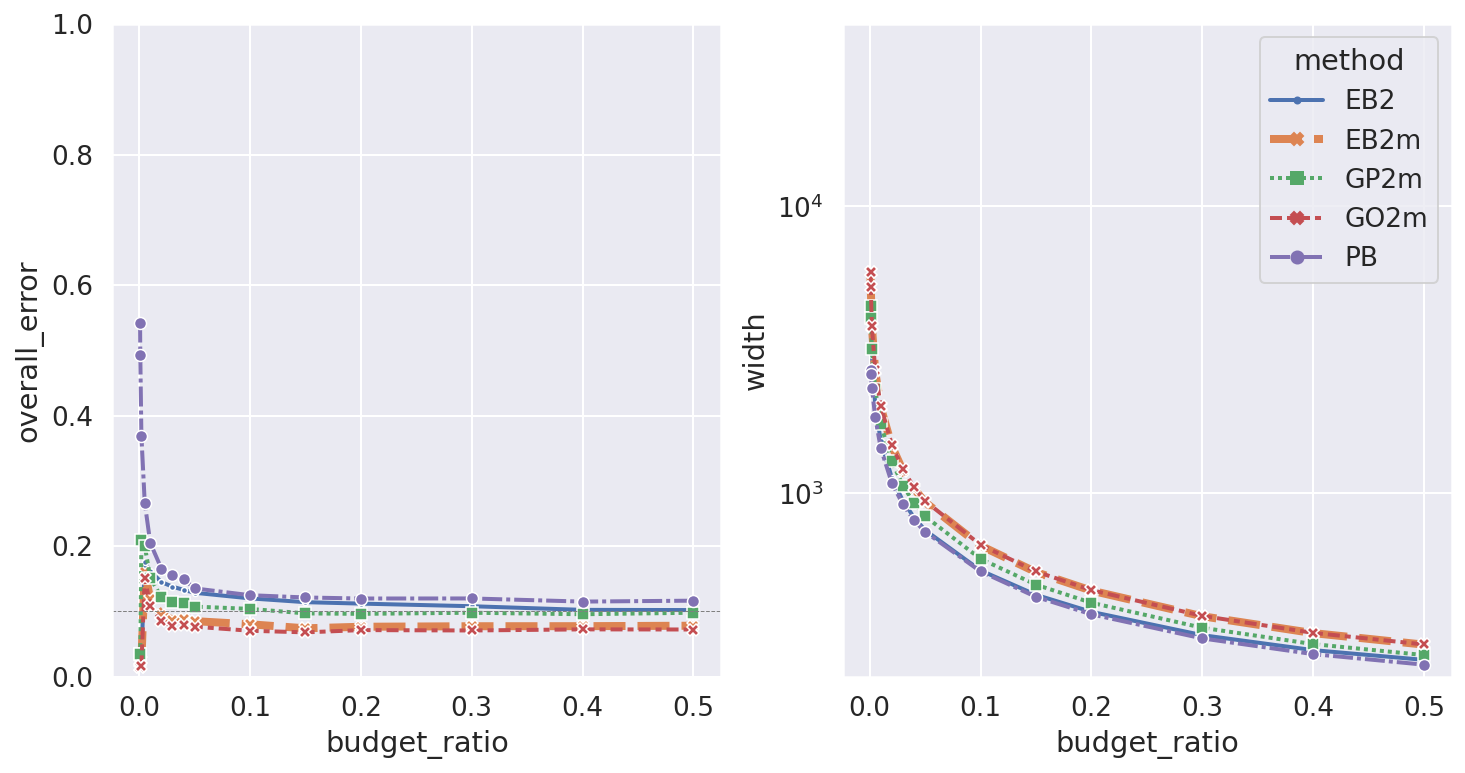} \centering
\caption{ Comparison between PB, GO2m, GP2m, EB2 and EB2m in terms of empirical coverage error
(left panel) and average CI width (in log scale, right panel) of $90\%$ two-sided
confidence intervals in the two-stage importance sampling setup with $\gamma_1=\gamma_2=0.25$, where the x-axis shows the budget ratio for the second stage, which is set to 0.1 for the first stage.}\label{fig:two_stage_025_025_05}
\end{figure}

\label{subsec:multi}

\subsection{Weights for the real-world data analysis}
Here is the data used for the case study reported in Section 7, where all numbers are rounded to two decimal places. Category B consists of a single event with weight 384.69, and Category A consists of 38 true positive events with weights:
[1.    1.    1.    1.    1.    1.    1.    1.    1.    1.    1.    1.
  1.03  1.18  1.18  1.18  1.35  1.38  1.43  1.59  1.72  1.85  1.88  2.09
 11.24 11.24 11.24 11.24 11.25 11.58 12.11 14.39 14.94 15.71 16.1  19.79
 20.   20.]. The estimated value for $||w||_2$ is 72.75 according to (5.7), 
 with $\gamma=0.9$ fitted with some historical data.
\end{appendix}

\end{document}